\documentclass[aps,showpacs,amssymb,amsfonts,superscriptaddress,twocolumn,pra]{revtex4-1}

\usepackage{amsmath,bbm,mathrsfs}
\usepackage{amsthm}
\usepackage{graphicx}
\usepackage{amsfonts}
\usepackage{amssymb}
\usepackage{color}
\usepackage{enumitem}
%\usepackage{hyperref}

% draft stuff, to be removed in final version (Tobias)
%\usepackage[draft]{fixme}
%\usepackage[draft]{showkeys}
\usepackage{changes}

% visually appealing hyperlinks in red and blue (Tobias)
\definecolor{myurlcolor}{rgb}{0,0,0.7}
\definecolor{myrefcolor}{rgb}{0.8,0,0}
\usepackage{hyperref}
\hypersetup{colorlinks,
linkcolor=myrefcolor,
citecolor=myurlcolor,
urlcolor=myurlcolor}

\font\Bbb =msbm10 
 \def\C{{\hbox {\Bbb C}}}

\def\textbf#1{{\bf #1}}
\def\be{\begin{equation}}
\def\ee{\end{equation}}
\def\ben{\begin{eqnarray}}
\def\een{\end{eqnarray}}
\def\eea{\end{array}}
\def\bea{\begin{array}}
\newcommand{\ot}[0]{\otimes}
\newcommand{\Tr}[1]{\mathrm{Tr}#1}
\newcommand{\bei}{\begin{itemize}}
\newcommand{\eei}{\end{itemize}}
\newcommand{\ket}[1]{|#1\rangle}

\begin{document}
%\draft

\newcommand{\eg}{{\it{e.g.~}}}
\newcommand{\ie}{{\it{i.e.~}}}
\newcommand{\etal}{{\it{et al.}}}
\newcommand{\daniel}[1]{{\color{red} #1}}
\newcommand{\remik}[1]{{\color{green} #1}}

%%%%%% Tobias' stuff %%%%%%%%
\newtheorem{thm}{Theorem}
\newtheorem{lem}[thm]{Lemma}
\theoremstyle{remark}
\newtheorem{example}{Example}
\newtheorem{cor}[thm]{Corollary}
\newtheorem{rem}[thm]{Remark}
\newcommand{\R}{\mathbb{R}}
\newcommand{\beq}{\begin{equation}}
\newcommand{\eeq}{\end{equation}}

\title{Tight Bell inequalities with no quantum violation from qubit unextendible product bases}

\author{Remigiusz Augusiak}
\author{Tobias Fritz}

\affiliation{ICFO--Institut de Ci\`{e}ncies Fot\`{o}niques, 08860
Castelldefels (Barcelona), Spain}

\author{Marcin Kotowski}
\author{Micha\l{} Kotowski}

\affiliation{Department of Combinatorics and Optimization and
Institute for Quantum Computing University of Waterloo, 200
University Avenue, Waterloo, Ontario,  N2L 3G1, Canada}

\author{Marcin Paw\l{}owski}

\affiliation{Department of Mathematics, University of Bristol, Bristol BS8 1TW, United Kingdom}

\author{Maciej Lewenstein}

\author{Antonio Ac\'in}

\affiliation{ICFO--Institut de Ci\`{e}ncies Fot\`{o}niques, 08860
Castelldefels (Barcelona), Spain}

\affiliation{ICREA--Instituci\'o Catalana de Recerca i Estudis
Avan\c{c}ats, Lluis Companys 23, 08010 Barcelona, Spain}

\begin{abstract}
We investigate the relation between unextendible product bases
(UPB) and Bell inequalities found recently in [R. Augusiak {\it et
al.}, \href{http://prl.aps.org/abstract/PRL/v107/i7/e070401}{Phys.
Rev. Lett. {\bf 107}, 070401 (2011)}]. We first review the
procedure introduced there that associates to any set of mutually
orthogonal product vectors in a many-qubit Hilbert space a Bell
inequality. We then show that if a set of mutually orthogonal
product vectors can be completed to a full basis, then the
associated Bell inequality is trivial, in the sense of not being
violated by any nonsignalling correlations. This implies that the
relevant Bell inequalities that arise from the construction all
come from UPBs, which adds additional weight to the significance
of UPBs for Bell inequalities. Then, we provide new examples of
{\it tight} Bell inequalities with no quantum violation
constructed from UPBs in this way. Finally, it is proven that the
Bell inequalities with no quantum violation introduced recently in
[M. Almeida {\it et al.},
\href{http://prl.aps.org/abstract/PRL/v104/i23/e230404}{Phys. Rev.
Lett. {\bf 104}, 230404 (2010)}] are tight for any odd number of
parties.
\end{abstract}

\maketitle

\section{Introduction}

It is well-established that quantum correlations (QC), i.e.,
correlations that can be obtained by local measurements on quantum
states, offer applications with no classical analog. For instance,
they provide cryptographic security not achievable by any
classical cryptographic protocol~\cite{Ekert,bhk,DIQKD}, they
enable the certification of the presence of
randomness~\cite{randomness,colbeck}, and, last but not least,
outperform classical correlations (CC) at communication complexity
tasks (see e.g. Ref.~\cite{Commcompl}).

It is then interesting to ask whether quantum correlations are
always more powerful than classical correlations. In other words,
is it possible to find tasks at which classical correlations
perform equally well as quantum? Such instances can be identified
with the aid of Bell inequalities~\cite{Bell}, which are
constraints satisfied by all CC. Any Bell inequality can be
interpreted as the success probability of a task in which distant
non-communicating parties are each given a certain input and then
must compute, in a distributed manner, the correct value of a
certain known function of the inputs. The violation of a Bell
inequality by some correlations indicates that the corresponding
task can be performed more efficiently by these correlations than
by any CC. Consequently, correlations leading to a Bell violation
do not have a classical realization. On the other hand, Bell
inequalities with no quantum violation provide tasks at which QC
offer no advantage over CC.

The first examples of Bell inequalities that cannot be violated by
quantum theory were derived in Ref.~\cite{Linden}. These
inequalities are nontrivial, as they are violated by
non-signalling correlations, which, necessarily, do not have a
quantum realization. The set of non-signalling correlations (NC)
is defined to be the set of all those correlations which do not
allow any instantaneous communication. However, the Bell
inequalities found in~\cite{Linden} are not tight, which is an
important feature in the present context (see
Figure~\ref{fig:tight}).
\begin{figure}[t]
a)\includegraphics[width=0.22\textwidth]{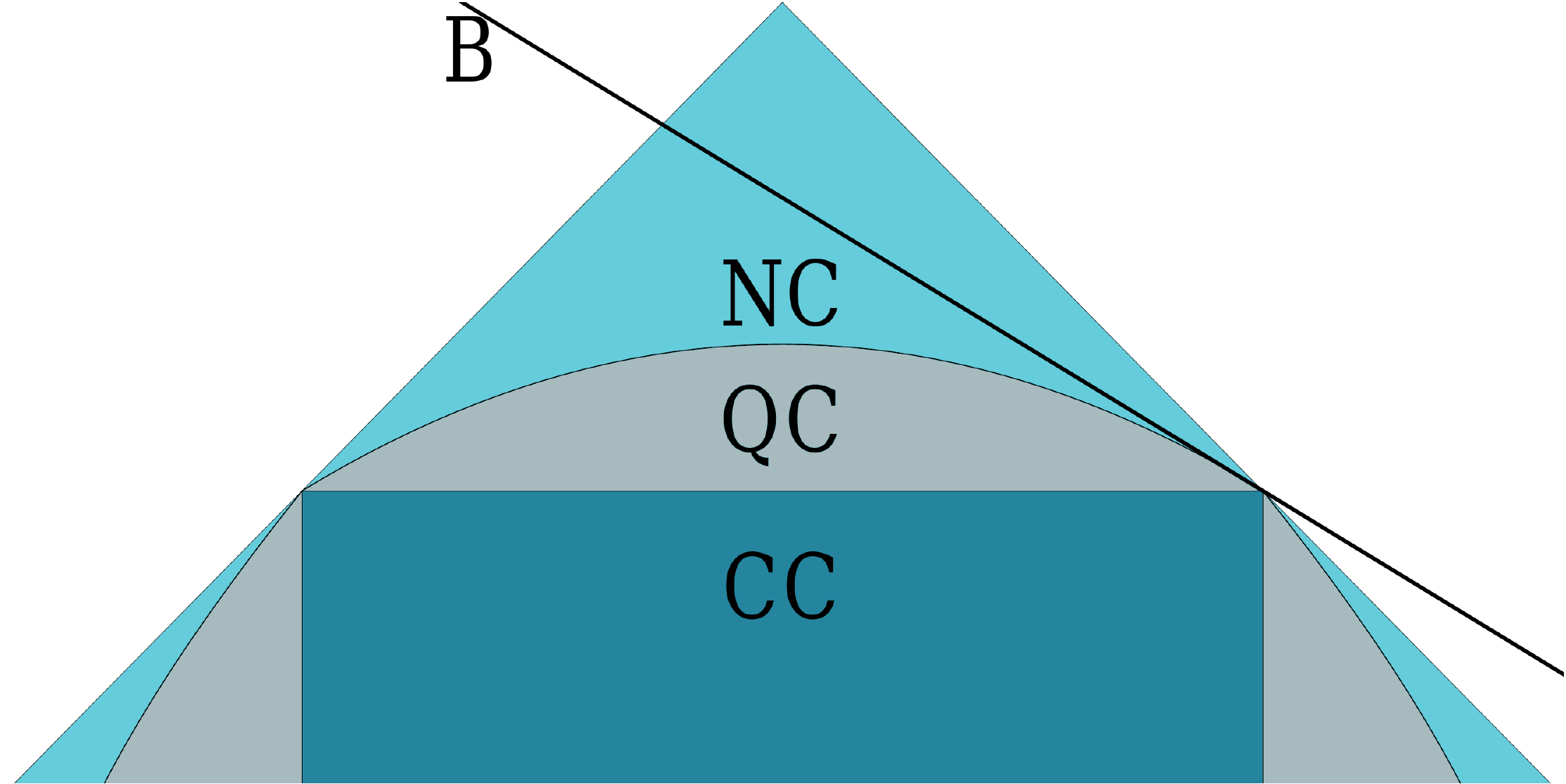}\,\,b)\includegraphics[width=0.22\textwidth]{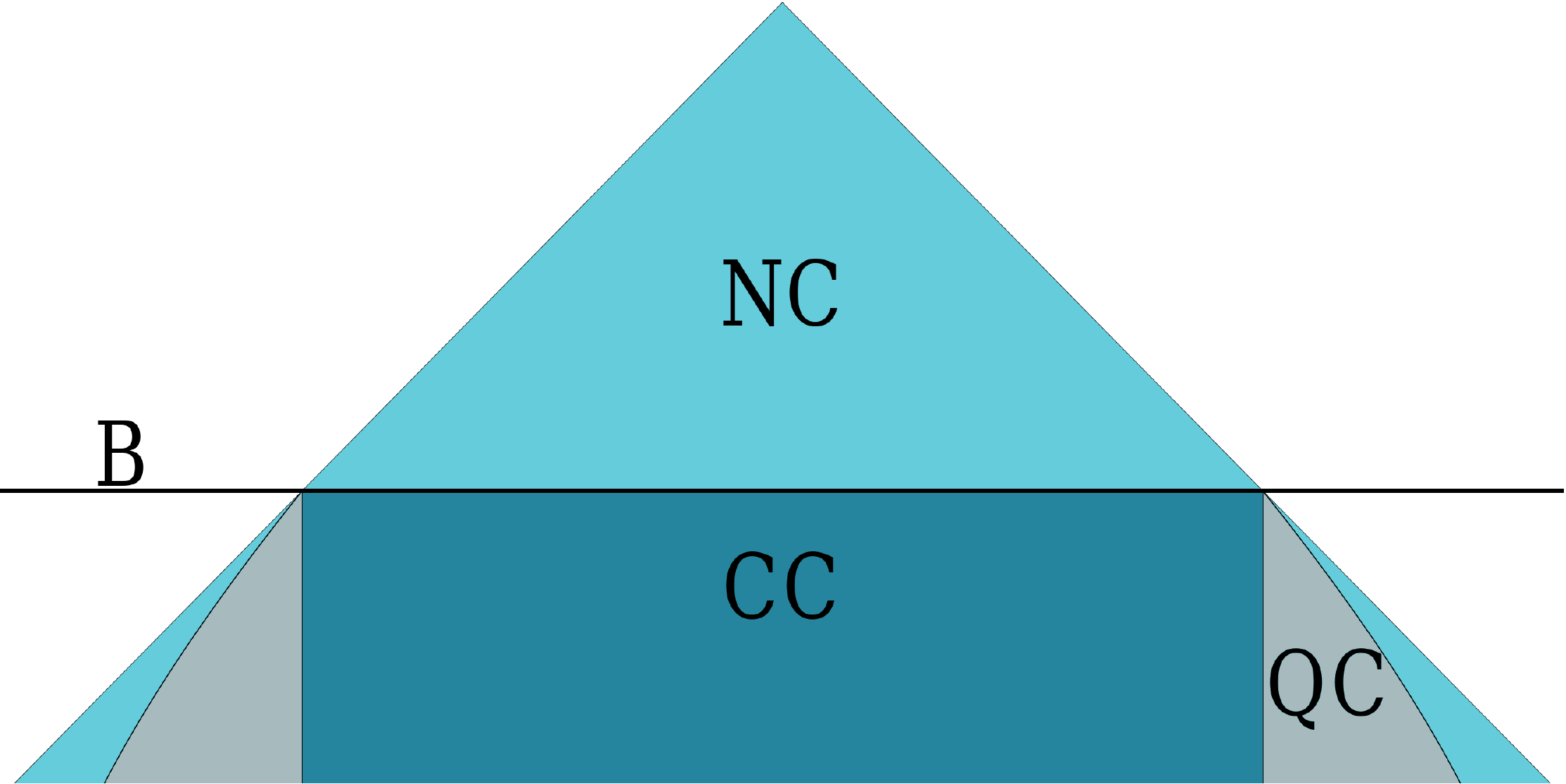}
\caption{Schematic depiction of the sets of classical $CC$,
quantum $QC$ and non-signalling correlations $NC$. Tight Bell
inequalities correspond to facets of the classical set. $B$
denotes a Bell inequality with no quantum violation which is a)
not tight, b) tight. Note that only tightness guarantees the
existence of a non-trivial region in which quantum and classical
correlations coincide.}\label{fig:tight}
\end{figure}
Recall that a Bell inequality is called \textit{tight} when it
defines a facet of the convex set of CC (see e.g.~Ref.~\cite{polytopes}).

To our knowledge, the first nontrivial tight Bell inequalities
which are not violated by quantum theory were those proposed in
Ref.~\cite{Mafalda}. From a geometric point of view, the existence
of such Bell inequalities implies that the convex sets of quantum
and classical correlations can share facets. These inequalities
were also used in a different context to prove that, contrary to
the bipartite scenario~\cite{Barnum, Hadley}, local quantum
measurements and the no-signalling principle do not imply that
correlations are quantum in a general multipartite
scenario~\cite{Hadley}.

More recently, some of us have proposed a systematic construction
of nontrivial Bell inequalities with no quantum
violation~\cite{my}. The construction exploits the concept of
unextendible product bases (UPBs)~\cite{BennettUPB}. This
connection is remarkable, as UPBs are a notion of entanglement
theory and heavily rely on the structure of tensor products of
Hilbert spaces. Interestingly, this construction reproduces the
Bell inequalities previously derived in~\cite{Mafalda}, thus
proving that it may lead to tight Bell inequalities with no
quantum violation.  Unfortunately, these have so far been the only
examples of tight Bell inequalities found via the construction.
The main aim of this paper is to provide new examples of tight
Bell inequalities with no quantum violation that arise from UPBs.
To this end, we discuss in more detail how Bell inequalities can
be derived from any set of mutually orthogonal product vectors in
many-qubit Hilbert spaces and prove that the concept of
unextendibility plays a crucial role for the nontriviality of the
associated Bell inequality. In particular, we show that the only
nontrivial Bell inequalities that can be constructed in this way
are those coming from UPBs or sets that can be completed only to a
UPB. We also prove that the Bell inequalities from
Ref.~\cite{Mafalda} are tight for any odd number of parties.

This paper is structured as follows. Sec.~\ref{Preliminaries}
introduces all concepts relevant for the upcoming sections. In
Sec.~\ref{Construction}, we recall the construction from~\cite{my}
which associates to a set of mutually orthogonal product vectors a
Bell inequality and study its properties in more detail. We prove
that whenever the initial set of product vectors can be completed
to a full basis of product vectors, then the resulting Bell
inequality is trivial in the sense that it cannot be violated by
any NC. In Sec.~\ref{GYNI} we show in more detail that the Bell
inequalities from~\cite{Mafalda} can be constructed from UPBs and
prove their tightness for any odd $n$.  Sec.~\ref{New} presents
new tight Bell inequalities with no quantum violation constructed
from UPB, while Sec.~\ref{Conclusion} concludes the paper.

\section{Preliminaries}
\label{Preliminaries}

Before getting to the results, let us first establish some
terminology and notation and recall concepts and facts concerning
unextendible product bases and nonsignalling correlations.

By abuse of terminology, we use the term ``vector'' in the context
of quantum states always in the sense of ``$1$-dimensional
subspace of a complex Hilbert space''. Phrased differently, this
means that we take our vectors to be unit vectors, and we identify
two unit vectors whenever they differ only by a complex phase. By
``basis'', we always mean an orthonormal basis.

\subsection{Unextendible product bases}
\label{UPB}

Consider a product $n$-partite Hilbert space
\begin{equation}\label{HilbertSpace}
\mathcal{H}=\mathbbm{C}^{d_1}\ot\ldots\ot\mathbbm{C}^{d_n}
\end{equation}
with $d_i$ $(i=1,\ldots,n)$ denoting local dimensions. Following
Ref.~\cite{BennettUPB}, an \textit{unextendible product basis}
(UPB) is a collection of mutually orthogonal fully product vectors
in $\mathcal{H}$,
\begin{equation}
U=\left\{\ket{\phi^{(1)}_j}\ot\ldots\ot\ket{\phi^{(n)}_j}\right\}_{j=1}^{|U|},
\end{equation}
obeying two conditions: (i)
$|U|<\mathrm{dim}\mathcal{H}=\Pi_{i=1}^{n}d_i$ ($U$ does not span
$\mathcal{H}$), and (ii) $(\mathrm{span}\,U)^{\perp}$ does not
contain any product vector, or, in other words, is a completely
entangled subspace.

UPBs were introduced in the context of entanglement
theory in Ref.~\cite{BennettUPB}, where they were used to obtain
one of the first constructions of bound entangled states
\cite{BennettUPB,UPBhuge}. More precisely, the state
\begin{equation}\label{upbstate}
    \rho_{U}=\frac{\mathbbm{1}_{\mathcal{H}}-\Pi_{U}}{D-|U|}\,,
\end{equation}
where $\Pi_{U}$ denotes the sum of projectors onto vectors from
$U$, has positive partial transpose with respect to any
bipartition, but nevertheless is entangled. While the former
follows from the fact that the application of partial
transposition with respect to any subset of parties to $\Pi_{U}$
returns another projector, the latter is a consequence of the lack
of product vectors in $(\mathrm{span}\,U)^{\perp}$ and hence the
range criterion applies here~\cite{PHrange}. UPBs are interesting
and intriguing objects and hence there has been some effort
towards understanding their properties and structure (see e.g.
Refs~\cite{BennettUPB,UPBhuge,upb, Bravyi}).

To illustrate the above definition, let us provide two examples of
UPBs.

\begin{example}
\label{pyramid} First, we consider one of the earliest examples of
a bipartite UPB, the so-called \textit{pyramid}~\cite{BennettUPB},
$\mathcal{U}_{\mathrm{Pyr}}=\left\{\ket{v_{j}}\ot\ket{v_{2j\,\mathrm{mod}\,
5}}\right\}_{j=0}^{4}\subset \mathbbm{C}^3\ot\mathbbm{C}^3$ with
\begin{equation}\label{vj}
\ket{v_j}=N\left(\cos\varphi_j\ket{0}+\sin\varphi_j\ket{1}+h\ket{2}\right),
\end{equation}
where $N=2/\sqrt{5+\sqrt{5}}$, $h=(1/2)\sqrt{1+\sqrt{5}}$, and
$\varphi_j=2\pi j/5$. One easily finds that there is no product
vector in $(\mathrm{span}\,\mathcal{U}_{\mathrm{Pyr}})^{\perp}$
and hence $\mathcal{U}_{\mathrm{Pyr}}$ is a UPB in
$\mathcal{H}=\mathbbm{C}^{3}\ot\mathbbm{C}^3$. Let us also note
that in the bipartite case, $\mathcal{U}_\mathrm{Pyr}$ is the
lowest-dimensional example of a UPB: in
$\mathbbm{C}^2\ot\mathbbm{C}^d$ there are no UPBs for any $d$.
\end{example}

\begin{example} Let us now take the following
four-element set of three-qubit vectors \cite{BennettUPB}:
\begin{equation}\label{Schifts}
\mathcal{U}_{\mathrm{Shifts}}=\{\ket{000},\ket{1\overline{e}e},\ket{e1\overline{e}},\ket{\overline{e}e1}\}
\end{equation}
where $\ket{e}\in\mathbbm{C}^2$ is an arbitrary unit vector
different from $\ket{0}$ and $\ket{1}$ and $\ket{\overline{e}}$
stands for the unit vector orthogonal to $\ket{e}$ (unique up to
phase). This set is a slight generalization of the so-called
``Shifts'' UPB found in Ref.~\cite{BennettUPB} and then
generalized to more parties in Ref.~\cite{UPBhuge}. Notice also
that for each qubit, we can replace
$\{\ket{e},\ket{\overline{e}}\}$ with a different basis
$\{\ket{e_i},\ket{\overline{e}_i}\}$ $(i=1,2,3)$ independent of
$\ket{0}$ and $\ket{1}$. Up to local unitary equivalence, there
are no other UPBs in
$\mathbbm{C}^2\ot\mathbbm{C}^2\ot\mathbbm{C}^2$~\cite{Bravyi}.
\end{example}

\subsection{Nonsignalling correlations}
\label{NC}

Let us consider $n$ observers having access to $n$ correlated
systems. The $i$th observer $(i=1,\ldots,n)$ can perform on his
system one of $m_i$ possible measurements with $r_i^{x_i}$
outcomes, henceforward denoted $a_i$, where
$x_i\in\{0,\ldots,m_i-1\}$ stands for the measurement choice of
the $i$th observer. The correlations established in this way are
determined by the collection of conditional probabilities
\begin{equation}\label{CP}
\{p(\boldsymbol{a}|\boldsymbol{x})\equiv p(a_1,\ldots,
a_n|x_1,\ldots, x_n)\},
\end{equation}
where $\boldsymbol{a}=(a_1,\ldots, a_n)$ and
$\boldsymbol{x}=(x_1,\ldots,x_n)$. The usual way of dealing with
these objects is to treat them as a vector in $\mathbbm{R}^D$ with
$D=\prod_{i=1}^{n}\sum_{x_i=0}^{m_i-1}r_i^{x_i}.$

Clearly, the probabilities $p(\boldsymbol{a}|\boldsymbol{x})$ are
nonnegative and normalized in the sense that
$\sum_{\boldsymbol{a}}p(\boldsymbol{a}|\boldsymbol{x})=1$ holds
for any $\boldsymbol{x}$. Additionally, as it is assumed that no
communication among the parties can take place when the
measurements are performed, the obtained correlations must obey
the principle of no-signalling: the choices of observables by a
set of parties cannot influence the statistics seen by the
remaining parties. Formally, this can be stated as a set of
equations of the form
\begin{equation}
\label{nonsignalling}
\begin{split}
&\sum_{a_i}p(a_1\ldots a_i\ldots a_n|x_1\ldots x_i\ldots
x_n)\\
=&\sum_{a_i}p(a_1\ldots a_i\ldots a_n|x_1\ldots x'_i\ldots x_n)
\end{split}
\end{equation}
for all $x_i,x'_i$, and $a_1,\ldots,a_{i-1},a_{i+1},\ldots,a_n$
and $x_1,\ldots,x_{i-1},x_{i+1},\ldots,x_n$ and all $i$. The
conditional probabilities~(\ref{CP}) constrained by the
positivity, normalization and the nonsignalling
conditions~(\ref{nonsignalling}) form a polytope (see
e.g.~Ref.~\cite{polytopes}) whose dimension depends on the
considered scenario and is given by~\cite{Lifting}:
\begin{equation}
d=\prod_{i=1}^{n}\left[\sum_{x_i=0}^{m_i-1}(r_i^{x_i}-1)+1\right]-1.
\end{equation}

\paragraph*{Quantum correlations (QC).} Assume now that the parties have
access to correlated quantum particles. The resulting
correlations are then guaranteed to satisfy the no-signalling equations and read
\begin{equation}\label{QC}
p(a_1,\ldots, a_n|x_1,\ldots, x_n)=\Tr\left[\varrho\,
P_{x_1}^{a_1}\ot\ldots\ot P_{x_n}^{a_n}\right],
\end{equation}
where $\varrho$ stands for a density matrix and $P_{x_i}^{a_i}$
denote positive operators representing the measurement outcomes at
the $i$th site. For each $x_i$, they need to satisfy
\begin{equation}\label{Conditions}
\sum_{a_i}P_{x_i}^{a_i}=\mathbbm{1}\qquad (i=1,\ldots,n) .
\end{equation}
Since any quantum measurement can be realized as a
projective measurement on a Hilbert space of sufficiently large
dimension, we can always assume that all
$P_{x_i}^{a_i}$ are orthogonal projectors.

\paragraph*{Classical correlations.} Let us now consider
correlations that can be established by the $n$ observers when
they have access only to shared classical information in the form
of shared randomness $\lambda$, which is a random variable with
arbitrary distribution $p(\lambda)$. This defines the set of {\it
classical correlations} (CC). It is the set of all those
conditional probabilities which can be written in the form
\begin{equation}\label{local}
p(\boldsymbol{a}|\boldsymbol{x})=\sum_{\lambda}p(\lambda)\prod_{i=1}^{n}p_i(a_i|x_i,\lambda).
\end{equation}
where each $p_i(a_i|x_i,\lambda)$ is an arbitrary conditional
probability distribution. It follows that, analogously to the case
of NC, the set of all CC is a polytope in the same space. Its
extremal points are the deterministic probabilities
$p(\boldsymbol{a}|\boldsymbol{x})=\prod_{i=1}^{n}p_i(a_i|x_i)$
where each probability $p_i(a_i|x_i)$ equals either zero or one.

The set of CC is strictly smaller than the set of QC
which~\cite{Bell}, in turn, is strictly smaller than the set of
NC~\cite{PR}.

\paragraph*{Bell inequalities.} Consider a linear combination of the
conditional probabilities,
$\sum_{\boldsymbol{a},\boldsymbol{x}}T_{\boldsymbol{a},\boldsymbol{x}}p(\boldsymbol{a}|\boldsymbol{x})$
with $T_{\boldsymbol{a},\boldsymbol{x}}$ being some $2n$-index
tensor $T_{\boldsymbol{a},\boldsymbol{x}}$. Finding the maximal
value, written as $\beta_{C}$, of this expression over all local
probabilities (\ref{local}), one arrives at {\it the Bell
inequality}~\cite{Bell}:
\begin{equation}\label{generalBellIneq}
\sum_{\boldsymbol{a},\boldsymbol{x}}T_{\boldsymbol{a},\boldsymbol{x}}p(\boldsymbol{a}|\boldsymbol{x})\leq
\beta_{C}.
\end{equation}
We say that a Bell inequality is {\it nontrivial} if it is
violated by some NC, that is there exist NC,
$p(\boldsymbol{a}|\boldsymbol{x})$, such that
$\sum_{\boldsymbol{a},\boldsymbol{x}}T_{\boldsymbol{a},\boldsymbol{x}}p(\boldsymbol{a}|\boldsymbol{x})>\beta_C$.

Geometrically, nontrivial Bell inequalities are hyperplanes that
separate CC from some NC, and possibly also from some QC. A Bell
inequality is said to be {\it tight} whenever it defines a facet
of the polytope of CC. Like any other polytope~\cite{polytopes},
the polytope of CC can be fully described in terms of its facets,
that is by all the tight Bell inequalities. If some correlations
do not have a classical realization, they necessarily violate a
tight Bell inequality. This explains our interest in tightness.

Given a Bell inequality, how does one find out whether it is tight
or not? It is tight if and only if those CC which saturate the
Bell inequality span, when treated as vectors from
$\mathbbm{R}^D$, an affine subspace of dimension $d-1$. So in
order to check tightness, one has to see whether the models
attaining the maximum value $\beta_C$ constitute a set of $d$
linearly independent vectors (for more detail see e.g.
Refs.~\cite{Lifting,Lluis}).

Finally, let us mention that when rewritten in an appropriate form
(entries of $T_{\boldsymbol{a},\boldsymbol{x}}$ are nonnegative
and normalized), every Bell inequality can be understood as a
nonlocal game as follows. Upon receiving, in a distributed manner,
the input $\boldsymbol{x}$ (from some fixed set of possible
inputs), the parties determine an output $\boldsymbol{a}$ and
receive a payoff $T_{\boldsymbol{a},\boldsymbol{x}}$. Then, the
left-hand side of (\ref{generalBellIneq}) corresponds to the value
of the game. Accordingly, the classical bound $\beta_C$ stands for
the maximal value of the game in the case when the only resource
at a disposal of the parties is a shared randomness. Then,
violation of a Bell inequality by some QC means that there exist
quantum resources allowing the parties to perform the
corresponding task with greater efficiency than allowed by
classical physics.

\section{Bell inequalities with no quantum violation from UPBs}
\label{Construction}

In this section we recall and study in some more detail the scheme
from Ref.~\cite{my} for constructing Bell inequalities from sets
of orthogonal product vectors. Also, we prove that if the set of
orthogonal product vectors can be completed to a full basis of
product vectors (or already constitutes a full basis itself), then
the associated Bell inequality is trivial in the sense that it
cannot be violated by any NC.

\subsection{The construction}
\label{constr}

We now restrict to an $n$-qubit Hilbert space
$\mathcal{H}=(\C^2)^{\otimes n}$. Suppose that
\begin{equation}
\mathcal{S}=\left\{\ket{\psi_j}=\ket{\psi_j^{(1)}}\ot\ldots\ot\ket{\psi_j^{(n)}}\right\}_{j=1}^{|\mathcal{S}|},
\end{equation}
is a set of product vectors in $\mathcal{H}$, where each
$\ket{\psi_j^{(i)}}\in\C^2$ is a unit vector. In the following, we
assume that the $\ket{\psi_j}$ are mutually orthogonal. This
implies an upper bound on the number of elements of $\mathcal{S}$
given by $|\mathcal{S}|\leq 2^n$. For the time being, however, we
do not assume $\mathcal{S}$ to be a UPB.

For each $i$, we now take the $\ket{\psi^{(i)}_j}$ to be ordered
in such a way that the vectors in the local set
$$
\mathcal{S}^{(i)} = \left\{ \ket{\psi^{(i)}_1},\ldots,\ket{\psi^{(i)}_{s_i}} \right\}
$$
are all different, and such that each $\ket{\psi^{(i)}_j}$ for
$j>s_i$ is already contained in this list. In general, either  of
$s_i<|\mathcal{S}|$ or $s_i=|\mathcal{S}|$ is possible.

Then we partition each $\mathcal{S}^{(i)}$ into disjoint subsets
$$
\mathcal{S}^{(i)}=\mathcal{S}^{(i)}_1\cup \ldots\cup\mathcal{S}^{(i)}_{m_i}
$$
such that two vectors in $\mathcal{S}^{(i)}$ are orthogonal if and
only if they lie in the same subset of the partition. This is
possible because of the following property of $\C^2$: if
$\ket{\phi'}$ is neither orthogonal to nor equal to $\ket{\phi}$,
then it is also neither orthogonal to nor equal to
$\ket{\phi^\perp}$. Alternatively speaking, orthogonality is an
equivalence relation on vectors in $\C^2$.

In the framework of~\cite{my}, this has been called
\textit{property (P)}, and it has also been noted that it is
automatic in the qubit case. Since here we consider the qubit case
only, each local subset $\mathcal{S}^{(i)}_j$ contains at most $2$
vectors. As example~\ref{pyramid}  shows, there exist sets of
orthogonal product vectors without property (P) when the local
Hilbert spaces have dimension higher than $2$.

Resuming the construction of the Bell inequality, we also need to
fix an arbitrary ordering of each subset $\mathcal{S}^{(i)}_j$.

Now the Bell scenario associated to this $\mathcal{S}$ is given
by $n$ parties, where party $i$ has $m_i$ measurement settings and
every measurement has two possible outcomes.

The Bell inequality associated to $\mathcal{S}$ in this scenario
is defined as follows. We assign to every product vector
$\ket{\psi_j}\in\mathcal{S}$ a certain term
$p(\boldsymbol{a}_j|\boldsymbol{x}_j)$. This term is defined by a
list of settings, $\boldsymbol{x}_j=(x_j^{(1)},\ldots,x_j^{(n)})$,
and a list of outcomes,
$\boldsymbol{a}_j=(a_j^{(1)},\ldots,a_j^{(n)})$. These are
obtained through the procedure of

\begin{itemize}
    \item determining the local vector $\ket{\psi_j^{(i)}}$ appearing at the $i$th site of $\ket{\psi_j}$ for each $i$,

    \item setting $x_j^{(i)}$ to be the number $k$ for which $\ket{\psi_j^{(i)}}\in\mathcal{S}_{k}^{(i)}$,
while taking $a^{(i)}_j$ to be the position of
$\ket{\psi_{j}^{(i)}}$ within $\mathcal{S}_{k}^{(i)}$.
\end{itemize}

Let us now take a linear combination of these terms, $\beta=\sum_i
q_i p(\boldsymbol{a}_i|\boldsymbol{x}_i)$, with nonnegative
weights $q_i$, which always can be assumed to obey $0\leq q_i\leq
1$. This leads us to the Bell inequality
\begin{equation}\label{BellIneq}
\sum_{i=1}^{|\mathcal{S}|}q_i
p(\boldsymbol{a}_i|\boldsymbol{x}_i)\leq \beta_C,
\end{equation}
where, as before, $\beta_C$ stands for the maximal value of the
left-hand side of~(\ref{BellIneq}) over classical probability
distributions~(\ref{CP}).

Our main aim throughout the present paper is to investigate these
inequalities, and, in particular, how their properties are related
to the properties of the underlying sets $\mathcal{S}$. First, we
prove that any such inequality cannot be violated by quantum
theory. Denoting by $\beta_Q$ the maximal value achievable by the
left-hand side of~(\ref{BellIneq}) within quantum theory,
we have the following fact.\\

%THEOREM1
\begin{thm}\label{theorem1}
Let $\mathcal{S}$ be the set of
mutually orthogonal vectors from $\mathcal{H}$.
Then for the corresponding Bell inequality (\ref{BellIneq}),
$\beta_C=\beta_Q=\max\{q_i\}$.
\end{thm}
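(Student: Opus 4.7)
The plan is to establish the two inequalities $\beta_C \geq \max_i\{q_i\}$ and $\beta_Q \leq \max_i\{q_i\}$ separately; since $\beta_C \leq \beta_Q$ automatically, this forces equality throughout.

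For the lower bound, I would simply exhibit a deterministic classical strategy. Let $i^{\ast}$ be an index achieving $q_{i^{\ast}} = \max_i\{q_i\}$. For each party $k$, define the deterministic response function to output $a_{i^{\ast}}^{(k)}$ whenever the input is $x_{i^{\ast}}^{(k)}$ (and anything otherwise). This realises $p(\boldsymbol{a}_{i^{\ast}}|\boldsymbol{x}_{i^{\ast}}) = 1$, so the left-hand side of (\ref{BellIneq}) attains at least $q_{i^{\ast}}$.

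For the upper bound, which is the interesting direction, I would invoke a projective quantum model $\{P_{x_i}^{a_i}\}$ on some state $\varrho$ and associate to each $\ket{\psi_j} \in \mathcal{S}$ the tensor product projector $Q_j := P_{x_j^{(1)}}^{a_j^{(1)}} \otimes \cdots \otimes P_{x_j^{(n)}}^{a_j^{(n)}}$, so that $p(\boldsymbol{a}_j|\boldsymbol{x}_j) = \Tr(\varrho Q_j)$. The key lemma is that $Q_j Q_k = 0$ whenever $j \neq k$. To see this, recall that orthogonality of the product vectors $\ket{\psi_j}$ and $\ket{\psi_k}$ forces some site $i$ at which $\ket{\psi_j^{(i)}} \perp \ket{\psi_k^{(i)}}$. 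By property (P), the orthogonality equivalence classes on each $\mathcal{S}^{(i)}$ coincide with the subsets $\mathcal{S}^{(i)}_l$ of the partition, hence the two local vectors lie in the same subset. By the construction, this means $x_j^{(i)} = x_k^{(i)}$ while $a_j^{(i)} \neq a_k^{(i)}$, so that $P_{x_j^{(i)}}^{a_j^{(i)}}$ and $P_{x_k^{(i)}}^{a_k^{(i)}}$ are orthogonal projectors belonging to the same projective measurement; the tensor factor at site $i$ already kills the product.

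Once pairwise orthogonality of the $Q_j$ is established, we have $\sum_j Q_j \leq \mathbbm{1}$, so $\sum_j p(\boldsymbol{a}_j|\boldsymbol{x}_j) = \Tr(\varrho \sum_j Q_j) \leq 1$. Combining this with $q_j \geq 0$ yields
\begin{equation}
\sum_{j=1}^{|\mathcal{S}|} q_j\, p(\boldsymbol{a}_j|\boldsymbol{x}_j) \leq \Bigl(\max_i q_i\Bigr)\sum_{j} p(\boldsymbol{a}_j|\boldsymbol{x}_j) \leq \max_i q_i,
\end{equation}
giving $\beta_Q \leq \max_i\{q_i\}$. The one delicate point, and the step I would expect to require the most care, is the translation from orthogonality of product vectors to operator orthogonality of the $Q_j$: it relies crucially on property (P) of $\C^2$, which guarantees that every pair of locally orthogonal vectors sits in a common block $\mathcal{S}^{(i)}_l$ and is hence assigned the same local setting but distinct outcomes. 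Without this, a locally orthogonal pair could receive different settings at site $i$ and the corresponding projectors need not multiply to zero.
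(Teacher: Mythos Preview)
Your proposal is correct and follows essentially the same route as the paper: the key observation in both is that orthogonality of two product vectors in $\mathcal{S}$ forces, via property~(P), a site with identical setting but distinct outcomes, so the associated product projectors $Q_j$ are pairwise orthogonal and the Bell operator $\sum_j q_j Q_j$ has maximal eigenvalue $\max_j q_j$. The only cosmetic difference is that the paper proves $\beta_C=\max\{q_i\}$ and $\beta_Q=\max\{q_i\}$ separately (noting that in any deterministic model at most one term survives), whereas you sandwich using $\beta_C\leq\beta_Q$.
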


\begin{proof}
Orthogonality of any pair of vectors from $\mathcal{S}$ implies
that at some position they have different vectors from the same
local subset $\mathcal{S}^{(i)}_k$. This means that any pair of
the associated conditional probabilities has at some site the same
inputs but different outputs (different outcomes of the the same
observable). Consequently, for any deterministic local model
(recall that to get $\beta_C$ one can restrict to these models) if
one of the conditional probabilities equals unity, the remaining
ones vanish (let us call such probabilities ``orthogonal'').
Consequently $\beta_C=\max\{q_i\}$.

In order to prove that $\beta_Q=\max\{q_i\}$, we can always
restrict to local von Neumann measurements and assign the product
projectors
\begin{equation}
P_j=\bigotimes_{i=1}^{n}P_{j}^{(i)}
\end{equation}
to the conditional probabilities
$p(\boldsymbol{a}_j|\boldsymbol{x}_j)$. Then, ``orthogonality'' of
the conditional probabilities directly implies that $P_j\perp P_k$
for $j\neq k$. Consequently, the corresponding Bell operator
\begin{equation}\label{BellOp}
B=\sum_{j=1}^{|\mathcal{S}|}q_j\bigotimes_{i=1}^{n} P_{j}^{(i)}.
\end{equation}
is a positive operator whose eigenvalues are the $q_j$'s, meaning that $\beta_Q=\max\{q_j\}$.
This completes the proof.
\end{proof}

\begin{cor}
An immediate consequence of this fact is that for a given set
$\mathcal{S}$, the strongest Bell inequality it
generates~(\ref{BellIneq}) is the one with equal $q$'s. This is
because the left-hand side of~(\ref{BellIneq}) can always be upper
bounded by
$\max\{q_i\}\sum_{j=1}^{|\mathcal{S}|}p(\boldsymbol{a}_j|\boldsymbol{x}_j)$
and the latter expression give rise to a Bell inequality with the
same classical bound as~(\ref{BellIneq}). Along the same lines, it
is fairly easy to see that a Bell inequality with unequal $q$'s
cannot be tight. Consequently, from now on we restrict to Bell
inequalities with equal $q$'s, i.e.~to
\begin{equation}\label{BellIneqEqual}
\sum_{i=1}^{|\mathcal{S}|}
p(\boldsymbol{a}_i|\boldsymbol{x}_i)\leq 1.
\end{equation}
\end{cor}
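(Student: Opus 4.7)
The corollary contains two distinct claims: (i) among the inequalities (\ref{BellIneq}) generated by a fixed $\mathcal{S}$ the equal-weights version is the strongest, and (ii) any instance of (\ref{BellIneq}) with unequal $q_i$'s fails to be tight. My plan is to prove them in sequence, abbreviating $p_i\equiv p(\boldsymbol{a}_i|\boldsymbol{x}_i)$.

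For (i), Theorem~\ref{theorem1} gives $\beta_C=\max_i q_i$, so after rescaling to $\max_i q_i=1$ the pointwise bound $\sum_i q_i p_i\leq\sum_i p_i$ is immediate from $0\leq q_i\leq 1$ and nonnegativity. Hence any correlation obeying (\ref{BellIneqEqual}) automatically obeys the unequal-weights inequality; this is the authors' own upper-bounding argument and shows the equal-weights version is at least as restrictive.

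For (ii), after the same rescaling suppose $q_k<1$ for some $k$, and let $F_q\subseteq CC$ and $F_1\subseteq CC$ be the faces of classical correlations saturating the unequal-weights inequality and (\ref{BellIneqEqual}) respectively. The key step is the inclusion $F_q\subseteq F_1\cap\{p_k=0\}$. Indeed, for any $p\in CC$ the orthogonality argument of Theorem~\ref{theorem1} shows that each deterministic strategy attains at most one $p_i=1$, so by convexity $\sum_i p_i\leq 1$; combining this with the saturation equality $\sum_i q_i p_i=1$ and with $\sum_i q_i p_i\leq\sum_i p_i$ forces $\sum_i p_i=1$ (so $p\in F_1$) and $\sum_i(1-q_i)p_i=0$, whence $p_k=0$ since $1-q_k>0$.

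To turn this containment into a dimension drop, I would exhibit an explicit deterministic $v\in F_1$ with $v_k=1$: at each site $j$ let the party deterministically output $a_k^{(j)}$ on input $x_k^{(j)}$ (and arbitrarily elsewhere). Then $v_k=1$, and for every $i\neq k$ the orthogonality of $\ket{\psi_i}$ with $\ket{\psi_k}$ yields a site at which $v$ outputs $a_k^{(j)}\neq a_i^{(j)}$ on the common input, forcing $v_i=0$. Hence $\sum_i v_i=1$, so $v\in F_1\setminus\{p_k=0\}$, whereas $\mathrm{aff}(F_q)\subseteq\{p_k=0\}$. Consequently $\mathrm{aff}(F_q)\subsetneq\mathrm{aff}(F_1)$, giving $\dim F_q<\dim F_1\leq d-1$, so the unequal-weights inequality is not a facet. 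The main obstacle is really only the clean identification $F_q\subseteq\{p_k=0\}$, which rests on the classical bound $\sum_i p_i\leq 1$ derived from Theorem~\ref{theorem1}'s orthogonality relation; beyond that the argument is routine convex-geometric bookkeeping.
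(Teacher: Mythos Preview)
Your treatment of (i) is exactly the paper's: bound $\sum_i q_i p_i$ by $\max_j\{q_j\}\sum_i p_i$ and invoke $\beta_C=\max_j\{q_j\}$ from Theorem~\ref{theorem1}.

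For (ii) the paper gives nothing beyond ``along the same lines, it is fairly easy to see,'' so your face-containment argument is supplying what the paper omits rather than reproducing it. The steps are correct: the inclusion $F_q\subseteq F_1\cap\{p_k=0\}$ follows from $\sum_i(1-q_i)p_i=0$ on saturating points, your explicit deterministic vertex $v$ with $v_k=1$ indeed lies in $F_1\setminus\{p_k=0\}$ by the orthogonality of the $\ket{\psi_i}$, and hence $\mathrm{aff}(F_q)\subsetneq\mathrm{aff}(F_1)$. The one assertion you leave unjustified is $\dim F_1\leq d-1$, i.e.\ that $F_1$ is a \emph{proper} face of $CC$. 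This follows whenever $|\mathcal{S}|<2^n$: the uniform local correlations $p(\boldsymbol{a}|\boldsymbol{x})=2^{-n}$ belong to $CC$ and give $\sum_i p_i=|\mathcal{S}|/2^n<1$, so they lie outside $F_1$. In the residual case $|\mathcal{S}|=2^n$ one has $F_1=CC$ (cf.\ the Remark after Theorem~\ref{theorem2}), your bound degenerates to $\dim F_q\leq d-1$, and $F_q$ can in fact coincide with a positivity facet; but this is precisely the full-basis situation that Theorem~\ref{theorem2} declares trivial, and the paper does not address it either.
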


Let us illustrate the above construction by applying it to
two particular sets of vectors.

\begin{example}
First, consider the set
$\mathcal{U}_{\mathrm{Shifts}}$~(\ref{Schifts}). Clearly, it has
the property (P) and at each site we can distinguish two local
sets $\mathcal{S}_0^{(i)}\equiv
\mathcal{S}_{0}=\{\ket{0},\ket{1}\}$ and
$\mathcal{S}_1^{(i)}\equiv
\mathcal{S}_{1}=\{\ket{e},\ket{\overline{e}}\}$ $(i=1,2,3)$. Then,
to each element of $\mathcal{U}_{\mathrm{Shifts}}$ we assign
conditional probabilities in the following way: $\ket{000}\mapsto
p(000|000)$, $\ket{1\overline{e}e}\mapsto p(110|011)$,
$\ket{e1\overline{e}}\mapsto p(011|101)$, and
$\ket{\overline{e}e1}\mapsto p(101|110)$. Summing up these
probabilities we get the tight Bell inequality with no quantum
violation found in Ref.~\cite{Sliwa} and studied in
Ref.~\cite{Mafalda}:
\begin{equation}\label{Ex1}
p(000|000)+p(110|011)+p(101|110)+p(011|101)\leq 1.
\end{equation}
\end{example}

\begin{example}
Second, we take the full basis in the three-qubit Hilbert space
$\mathcal{H}=(\mathbbm{C}^2)^{\ot 3}$ giving rise to the
phenomenon called nonlocality without
entanglement~\cite{BennettNNWE}:
\begin{equation}
\mathcal{S}=\{\ket{000},\ket{e01},\ket{01e},
\ket{01\overline{e}},\ket{1e0},\ket{\overline{e}01},
\ket{1\overline{e}0},\ket{111}\}.
\end{equation}
As before, we partition the set of local vectors into subsets; in
this case, the local vectors are the same for each party, and the
subsets are
$$
\mathcal{S}_0=\left\{|0\rangle,|1\rangle\right\} ,\quad  \mathcal{S}_1=\left\{|e\rangle,|\overline{e}\rangle\right\}.
$$
Then, associating conditional probabilities to every element of
$\mathcal{S}$ and following the above procedure, we arrive at
the Bell inequality
\begin{eqnarray}\label{BI2}
&&p(000|000)+p(001|100)+p(010|001)+p(011|001)\nonumber\\
&&+p(100|010)+p(101|100)+p(110|000)\nonumber\\
&&+p(111|000)\leq 1.
\end{eqnarray}
This one, however, is trivial as it cannot be violated by any NC.
Indeed, using nonsignalling conditions (\ref{nonsignalling}), it
can be shown that the left-hand side of Eq.~(\ref{BI2}) is exactly
one. This, as we will see shortly, is a consequence of the fact
that $\mathcal{S}$ is a basis in $(\mathbbm{C}^{2})^{\ot 3}$.
\end{example}

These two examples reflect the importance of the notion of UPB in
our construction. Any set of product orthogonal vectors
$\mathcal{S}\subseteq \mathcal{H}$ is either a UPB (or a set that
can be completed only to a UPB), or completable to a full basis in
$\mathcal{H}$ (or already one). In the second case, we will call
the set {\it completable}. Interestingly, any Bell inequality
constructed from the completable set is trivial in the sense that
it cannot be violated by any NC. On the other hand, Bell
inequalities associated to UPBs are always nontrivial as there
exist NC violating them. The following two theorems formalize the above statements. \\

%THEOREM 2
\begin{thm}\label{theorem2}
Let $\mathcal{S}$ be a set of orthogonal product vectors in an
$n$-qubit Hilbert space $\mathcal{H}$. If $\mathcal{S}$ is
completable, then the resulting Bell inequality~(\ref{BellIneq})
cannot be violated by any NC.
\end{thm}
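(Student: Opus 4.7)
The plan is to extend $\mathcal{S}$ to a full orthogonal product basis $\mathcal{B}$ of $\mathcal{H}=(\mathbb{C}^2)^{\otimes n}$ (possible by the completability hypothesis) and to prove the stronger identity
\[
\sum_{\psi\in\mathcal{B}} p(\boldsymbol{a}_\psi|\boldsymbol{x}_\psi) \;=\; 1
\]
for every nonsignalling distribution $p$ on the Bell scenario associated to $\mathcal{B}$. Since (\ref{BellIneqEqual}) has unit weights and $\mathcal{S}\subseteq\mathcal{B}$, this identity yields the theorem: any NS distribution on $\mathcal{S}$'s scenario extends to $\mathcal{B}$'s (possibly larger) scenario, and there $\sum_{\psi\in\mathcal{S}}p_\psi\leq\sum_{\psi\in\mathcal{B}}p_\psi=1$.

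To establish the identity I would proceed by induction on $n$, with the inductive claim generalized to nonsignalling sub-measures: nonnegative tensors satisfying~(\ref{nonsignalling}) whose total weight $N=\sum_{\boldsymbol{a}}p(\boldsymbol{a}|\boldsymbol{x})$ (independent of $\boldsymbol{x}$ by NS) may be any number in $[0,1]$. The inductive claim then reads $\sum_{\psi\in\mathcal{B}}p_\psi=N$, and the base case $n=1$ is just normalization of $p$ at a single qubit. The favourable situation is that some party~$i$ has exactly two distinct local vectors $\{\ket{\phi},\ket{\phi^\perp}\}$ in $\mathcal{B}$: the split $\mathcal{B}=\mathcal{B}_\phi\cup\mathcal{B}_{\phi^\perp}$ partitions $\mathcal{B}$ into two halves, each of which---by a dimension count using that orthogonality is preserved after projecting out party~$i$---projects onto a full product basis of $(\mathbb{C}^2)^{\otimes(n-1)}$. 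Applying the inductive hypothesis to each half with the conditional NS sub-measures $p_\phi(\boldsymbol{a}|\boldsymbol{x}):=p(\boldsymbol{a},0|\boldsymbol{x},x_i)$ and $p_{\phi^\perp}$ and adding yields $P_i(0|x_i)+P_i(1|x_i)=N$ by normalization at party~$i$.

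The main obstacle, illustrated by Example~2, is that such a clean two-way decomposition need not exist: at every party, $\mathcal{B}$ may have several distinct equivalence classes of local vectors. To handle this, I would pick one equivalence class $\{\ket{\phi},\ket{\phi^\perp}\}$ at party~$i$ and split off $\mathcal{B}_\phi\cup\mathcal{B}_{\phi^\perp}$; the ``leftover'' basis elements, whose local vector at~$i$ lies in a different equivalence class, group by setting at party~$i$ into pairs which---by nonsignalling at party~$i$ (summing over $a_i$ is independent of $x_i$)---are equivalent to pairs at the chosen setting. These leftover pairs then distribute between the $\phi$- and $\phi^\perp$-halves and augment them into full $(n-1)$-qubit product bases on which the inductive hypothesis can be applied. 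This is precisely the mechanism at work in Example~2, where the pair $\{\ket{e01},\ket{\bar{e}01}\}$ projects onto the single two-qubit vector $\ket{01}$ and completes each of the non-full halves $\{\ket{00},\ket{1e},\ket{1\bar{e}}\}$ and $\{\ket{e0},\ket{\bar{e}0},\ket{11}\}$ into a full product basis of $(\mathbb{C}^2)^{\otimes 2}$.

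The delicate combinatorial step is to verify that the leftover-completion always yields full $(n-1)$-qubit product bases, possibly after iterating over several equivalence classes at party~$i$ and combining the contributions consistently via repeated applications of nonsignalling. This is the main technical hurdle; property~(P) of qubits (orthogonality being an equivalence relation on $\mathbb{C}^2$) is the key ingredient that makes the pairings line up consistently. Once this combinatorial step is in place, the induction closes cleanly and produces the desired identity.
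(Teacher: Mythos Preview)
Your approach differs substantially from the paper's, which avoids induction entirely via a one-line convexity argument: the uniform distribution $p(\boldsymbol{a}|\boldsymbol{x})=2^{-n}$ is an interior point of the CC polytope and obviously saturates the inequality associated to a full basis (there being exactly $2^n$ terms). Any linear inequality saturated by an interior point is constant on the affine hull of the polytope, and that affine hull contains all NC, so the inequality is an identity on NC. For a merely completable $\mathcal{S}$ one passes to a completion and bounds. (The paper also records, in a Remark, a combinatorial alternative based on the observation that the $2^n$ outcome strings $\boldsymbol{a}_j$ are pairwise distinct.)

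Your inductive scheme has a genuine gap precisely where you flag the ``main technical hurdle'': leftover elements need not pair up as you describe. Consider the three-qubit full product basis
\[
\{\ket{000},\ket{001},\ket{010},\ket{011},\ket{1e0},\ket{1e1},\ket{1\overline{e}e},\ket{1\overline{e}\,\overline{e}}\}.
\]
Splitting at party~$2$ with chosen class $\{\ket{0},\ket{1}\}$, the four leftover elements are $\ket{1e0},\ket{1e1},\ket{1\overline{e}e},\ket{1\overline{e}\,\overline{e}}$, and for \emph{none} of them does the partner with flipped party-$2$ vector and identical remaining components belong to the basis; so nonsignalling at party~$2$ cannot be invoked termwise as you propose. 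The identity $\sum_\psi p_\psi=1$ still holds here, but via a different cancellation pattern (the leftover sum collapses to the party-$1$ marginal only after first applying NS at party~$3$). A correct inductive proof along your lines would therefore need a more global reorganization of the NS identities than the local pairing you sketch, and you have not supplied one; the paper's argument sidesteps all of this.
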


\begin{proof}
Let us start by noting that any Bell inequality which is saturated
by some interior point of the CC polytope is trivial. Then, the
left-hand side of~(\ref{generalBellIneq}) has a constant value,
equal to $\beta_C$, on the whole affine subspace spanned by all
CC, and this subspace contains all nonsignalling correlations.

To see this more explicitly, assume that a given Bell
inequality~(\ref{generalBellIneq}) is saturated by an interior
point $\{\widetilde{p}(\boldsymbol{a}|\boldsymbol{x})\}$ of the
corresponding CC polytope, i.e.,
\begin{equation}\label{Satur}
\sum_{\boldsymbol{a},\boldsymbol{x}}T_{\boldsymbol{a},
\boldsymbol{x}}\widetilde{p}(\boldsymbol{a}|\boldsymbol{x})=
\beta_{C}.
\end{equation}
We can always represent
$\{\widetilde{p}(\boldsymbol{a}|\boldsymbol{x})\}$ as a convex
combination of some extremal point (deterministic local point),
denoted $\{p_{\mathrm{ex}}(\boldsymbol{a}|\boldsymbol{x})\}$, of
this polytope and some other point lying on its boundary. This,
when substituted into Eq.~(\ref{Satur}), directly implies that
$\{p_{\mathrm{ex}}(\boldsymbol{a}|\boldsymbol{x})\}$ must
saturate~(\ref{generalBellIneq}). Since any extremal point of the
CC polytope can be used in this
decomposition,~(\ref{generalBellIneq}) must be saturated by all of
them. Consequently, any affine combination of the vertices of the
CC polytope, and in particular any NC, also
saturates~(\ref{generalBellIneq}).

Let us now assume that $\mathcal{S}$ is a full basis in an $n$-qubit
Hilbert space $\mathcal{H}$ and consider the associated Bell
inequality
\begin{equation}\label{AnotherBellIneq}
\sum_{j=1}^{\mathrm{dim}\mathcal{H}}p(\boldsymbol{a}_j|\boldsymbol{x}_j)\leq
1.
\end{equation}
Then, let us take the uniform probability distribution, i.e.,
$p(\boldsymbol{a}|\boldsymbol{x})=1/\mathrm{dim}\mathcal{H}$ for
any $\boldsymbol{a}$ and $\boldsymbol{x}$. On one hand, it clearly
saturates the Bell inequality~(\ref{AnotherBellIneq}). On the
other hand, it belongs to the interior of the corresponding CC
polytope, which, in view of what we have just said, implies
that~(\ref{AnotherBellIneq}) cannot be violated by any NC.

Finally, let us assume that $\mathcal{S}$ is not a full basis in
$\mathcal{H}$, but can be completed to one. Let us write
$\overline{\mathcal{S}}=\{\ket{\phi_j}\}_{j=1}^{\mathrm{dim}\mathcal{H}-|\mathcal{S}|}$
for the completing set of mutually orthogonal product vectors.
Again $\mathcal{S}\cup \overline{\mathcal{S}}$ is a set of
mutually orthogonal product vectors, and therefore has an
associated Bell inequality
$$
\sum_{j=1}^{\mathrm{dim}\mathcal{H}} p(\boldsymbol{a}_j|\boldsymbol{x}_j) \leq 1 .
$$
In the previous paragraph, we showed that this inequality is
trivial in the sense that all NC satisfy it with equality. Now we
can upper bound the left-hand side of~(\ref{BellIneqEqual}) for
any NC $p(\boldsymbol{a}|\boldsymbol{x})$ by
\begin{equation}
\sum_{j=1}^{|\mathcal{S}|}p(\boldsymbol{a}_j|\boldsymbol{x}_j)\leq
\sum_{j=1}^{\mathrm{dim}\mathcal{H}}p(\boldsymbol{a}_j|\boldsymbol{x}_j) = 1.
%\sum_{i=1}^{\mathrm{dim}\mathcal{H}}p(\boldsymbol{a}_j|\boldsymbol{x}_j).
\end{equation}
Therefore, any NC satisfy~(\ref{BellIneq}). This completes the proof.
\end{proof}

%REMARK
\begin{rem}
For completeness, let us present here an alternative but less
general way of proving the above statement. It uses the particular
form of the considered Bell inequalities~(\ref{BellIneqEqual}) and
provides some additional insight into the relationship between
UPBs and Bell inequalities.

Let us assume that $\mathcal{S}$ is a full basis in $\mathcal{H}$
but the associated Bell inequality~(\ref{BellIneqEqual}) is not
saturated by some extreme point
$\{\widetilde{p}(\boldsymbol{a}|\boldsymbol{x})\}$ of the
associated polytope of CC. This means that for this point (recall
that probabilities representing deterministic local models can
only equal zero or one),
\begin{equation}
\label{trivialineq}
\sum_{j=1}^{\mathrm{dim}\mathcal{H}}\widetilde{p}(\boldsymbol{a}_j|\boldsymbol{x}_j)=0,
\end{equation}
and consequently
\begin{equation}\label{12345}
\widetilde{p}(\boldsymbol{a}_j|\boldsymbol{x}_j)=0
%\qquad (i=j,\ldots,\mathrm{dim}\mathcal{H}).
\end{equation}
for every $j=1,\ldots,\mathrm{dim}\mathcal{H}$. By assumption,
$\{\widetilde{p}(\boldsymbol{a}|\boldsymbol{x})\}$ is local and
deterministic, so that
$\widetilde{p}(\boldsymbol{a}|\boldsymbol{x})=\widetilde{p}(a_1|x_1)\ldots
\widetilde{p}(a_n|x_n)$ for any $\boldsymbol{a}$ and
$\boldsymbol{x}$.

For concreteness, but without loss of generality, we can assume
that whenever $\widetilde{p}(a_i|x_i)=0$ for some observable $x_i$
at site $i$, then $a_i=0$. This amounts to labelling the outcomes
of each observable such that $\widetilde{p}(1|x_i)=1$.

Now the orthogonality of the vectors in $\mathcal{S}$, from which
the inequality was constructed, means that any two outcome strings
$\boldsymbol{a}_{j_1}$ and $\boldsymbol{a}_{j_2}$ which appear
in~(\ref{trivialineq}) are different: any two vectors in
$\mathcal{S}$ have orthogonal components at some site $i$; at this
site $i$, the associated outcome strings necessarily have to be
different, so that $a_{j_1,i}\neq a_{j_2,i}$ (where now the second
index enumerates the parties, $i=1,\ldots,n$). However, since
there are only $2^n=\mathrm{dim}\mathcal{H}$ possible outcome
strings, all of these do appear in~(\ref{trivialineq}). In
particular, also the constant outcome string $1\ldots 1$ occurs
in~(\ref{12345}) for some $j$. This is a contradiction to
$\widetilde{p}(1\ldots 1|\boldsymbol{x})=1$ for all
$\boldsymbol{x}$.

Therefore, the Bell inequality~(\ref{BellIneqEqual}) associated to
a full basis in $\mathcal{H}$ is saturated by all the extreme
points of the corresponding CC polytope and hence trivial.
\end{rem}

%THEOREM 3
\begin{thm}\label{theorem3}
Let $\mathcal{S}$ be an $n$-qubit UPB. Then the corresponding Bell
inequality~(\ref{BellIneqEqual}) is nontrivial, i.e., there exist
NC violating it.
\end{thm}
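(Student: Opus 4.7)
The plan is to exhibit an explicit nonsignalling distribution $p^*$ that violates the Bell inequality, i.e.\ with $\sum_{j=1}^{|\mathcal{S}|} p^*(\boldsymbol{a}_j|\boldsymbol{x}_j) > 1$, leveraging the unextendibility of $\mathcal{S}$ in an essential way.

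As a preparatory step, I would first establish a useful converse to Theorem~\ref{theorem2}: the Bell functional $L(p) = \sum_j p(\boldsymbol{a}_j|\boldsymbol{x}_j)$ is identically equal to $1$ on the NC polytope if and only if $\mathcal{S}$ is a full product basis. The nontrivial direction follows from a counting argument on deterministic local models $(f_1,\ldots,f_n)$: the Bell sum for such a model equals $|\{j : f_i(x_j^{(i)}) = a_j^{(i)}\ \forall i\}|$, whose average over all such models is $|\mathcal{S}|/2^n$, so identical saturation forces $|\mathcal{S}| = 2^n$. Since $\mathcal{S}$ is a UPB, $L$ is non-constant on NC; in particular, the quantum statistics generated by any state supported in the entangled complement of $\mathrm{span}\,\mathcal{S}$ already give $L = 0$.

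Next I would construct $p^*$ by perturbing a classical vertex $p_{CC}$ saturating $L(p_{CC}) = 1$ along a nonsignalling direction $\Delta$ with $L(\Delta) > 0$, so that $p^* = p_{CC} + \epsilon \Delta$ lies in NC but outside CC. The direction $\Delta$ would be built as a generalized Popescu--Rohrlich box tailored to the combinatorial data of $\mathcal{S}$: for each pair $|\psi_{j_1}\rangle, |\psi_{j_2}\rangle \in \mathcal{S}$ agreeing on all sites but one, install a PR-like coupling at the differing site. The unextendibility of $\mathcal{S}$ ensures that these couplings do not globally cancel, producing a genuine nonsignalling direction not realizable in CC. On the Shifts UPB~(\ref{Schifts}) this construction recovers the three-party GYNI violation of Ref.~\cite{Mafalda}.

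The main obstacle is twofold: first, showing that the NC polytope extends strictly past $L = 1$ (having $L$ non-constant on NC does not automatically imply $L > 1$ somewhere, since one could in principle have $L < 1$ on some NC and $L \leq 1$ throughout); and second, keeping $p^* = p_{CC} + \epsilon \Delta \geq 0$ componentwise for a small $\epsilon > 0$, since deterministic $p_{CC}$ has many zero components that $\Delta$ must respect. A careful choice of $p_{CC}$ (perhaps in the relative interior of the face $\{L = 1\} \cap \mathrm{CC}$) together with the PR-box-like $\Delta$ above should suffice; once produced, verifying nonsignalling~(\ref{nonsignalling}) and computing $L(p^*) > 1$ reduces to direct calculation.
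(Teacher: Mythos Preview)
Your proposal has a genuine gap, which you yourself identify: showing that the nonsignalling maximum strictly exceeds~$1$. Everything up to that point is fine---the counting argument that the average of $L$ over deterministic models is $|\mathcal{S}|/2^n<1$ is correct, and the observation that $L=0$ on the bound entangled state with the canonical measurements is also correct. But these facts only show that $L$ is nonconstant on the affine hull of CC (equivalently, of NC); they do not show that the NC polytope actually protrudes past the hyperplane $L=1$. Your proposed fix, the perturbation $p^*=p_{CC}+\epsilon\Delta$ with $\Delta$ built from PR-type couplings at sites where pairs of UPB vectors differ, is too vague to close this gap. In particular, two elements of $\mathcal{S}$ that agree at all sites but one correspond to Bell terms with \emph{identical inputs} and outputs differing at a single site; this is not the input--output correlation structure on which a PR box acts, so it is unclear what ``installing a PR-like coupling'' means here or why unextendibility would prevent cancellation. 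Without a concrete $\Delta$ and a verification that $L(\Delta)>0$ and that positivity is preserved, the argument remains a sketch of a hoped-for construction rather than a proof.

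The paper's proof bypasses this difficulty entirely by going through entanglement theory. It takes the normalized witness $W\propto\Pi_{\mathrm{UPB}}-\epsilon\mathbbm{1}$, with $\epsilon$ the minimum of $\langle\psi_{\mathrm{prod}}|\Pi_{\mathrm{UPB}}|\psi_{\mathrm{prod}}\rangle$ over product vectors; unextendibility makes $\epsilon>0$, and the witness detects the bound entangled state $\rho_{\mathcal{S}}$, which after a one-line computation gives $\mathrm{Tr}(W\,\Pi_{\mathrm{UPB}})>1$. Since $\Pi_{\mathrm{UPB}}$ is precisely the Bell operator for~(\ref{BellIneqEqual}) under the canonical measurements, and since any entanglement witness (being unit-trace and nonnegative on product projectors) defines valid nonsignalling correlations, this exhibits a concrete NC point with $L>1$. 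The key ingredient you are missing is this witness--nonsignalling correspondence: it is what turns the structural fact ``$\mathcal{S}$ is unextendible'' directly into a violating NC box, without any delicate perturbation argument.
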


\begin{proof}
Let us denote by $\Pi_{\mathrm{UPB}}$ the projector onto the UPB
$\mathcal{S}$, and introduce a normalized entanglement witness
\begin{equation}\label{witness}
W=\frac{1}{|\mathcal{S}|-\mathrm{dim}\mathcal{H}}\left(\Pi_{\mathrm{UPB}}-\epsilon
\mathbbm{1}_{\mathcal{H}}\right),
\end{equation}
where $\epsilon$ is the positive number defined as
\begin{equation}
\epsilon=\min_{\ket{\psi_{\mathrm{prod}}}\in\mathcal{H}}
\langle\psi_{\mathrm{prod}}|\Pi_{\mathrm{UPB}}|\psi_{\mathrm{prod}}\rangle
\end{equation}
with the minimum going over all fully product vectors in
$\mathcal{H}$ and $\mathbbm{1}_{\mathcal{H}}$ standing for the
identity acting on $\mathcal{H}$.

This witness detects entanglement of the bound entangled
state~(\ref{upbstate}) in the sense that
\begin{equation}
\Tr\left(W\rho_{\mathcal{S}}\right)<0.
\end{equation}
Using the explicit form of $\rho_{\mathcal{S}}$ [cf.
Eq.~(\ref{upbstate})], we can rewrite the above in the following
way
\begin{equation}
\Tr\left(W\Pi_{\mathrm{UPB}}\right)>1.
\end{equation}
To complete the proof, one first notices that $\Pi_{\mathrm{UPB}}$
is exactly a Bell operator corresponding to Bell
inequality~(\ref{BellIneqEqual}) constructed from the UPB
$\mathcal{S}$. Second, it is known that any entanglement witness
represents some nonsignalling correlations (see e.g.
Ref.~\cite{Hadley}). These two facts together imply that the Bell
inequality constructed from the UPB $\mathcal{S}$ is violated by
some NC.
\end{proof}

\begin{cor}
An immediate consequence of both the above theorems is that a Bell
inequality~(\ref{BellIneqEqual}) constructed from $\mathcal{S}$ is
nontrivial iff $\mathcal{S}$ is a UPB or can be completed only to
a UPB (in the sense that there still exists a product vector in
$\mathcal{H}$ orthogonal to $\mathcal{S}$, however, one is unable
to complete $\mathcal{S}$ to a full basis in $\mathcal{H}$).
\end{cor}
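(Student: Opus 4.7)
The biconditional follows by combining Theorems~\ref{theorem2} and~\ref{theorem3} through a three-case analysis on $\mathcal{S}$. Every set of mutually orthogonal product vectors in $\mathcal{H}$ falls into exactly one of: (a) $\mathcal{S}$ extends to a full product basis of $\mathcal{H}$; (b) $\mathcal{S}$ is itself a UPB; or (c) $\mathcal{S}$ is not a UPB, but every maximal orthogonal product extension of $\mathcal{S}$ is a UPB. The condition ``$\mathcal{S}$ is a UPB or can be completed only to a UPB'' in the statement corresponds precisely to (b)\,$\vee$\,(c), whereas case (a) corresponds to completability to a full basis.

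The direction ``nontrivial $\Rightarrow$ (b)\,$\vee$\,(c)'' is the contrapositive of Theorem~\ref{theorem2}, which shows that case (a) forces triviality. For the converse, case (b) is exactly Theorem~\ref{theorem3}. The substantive subcase is (c): I would pick a maximal orthogonal product extension $\mathcal{S}^{\ast}\supsetneq\mathcal{S}$, which by hypothesis is a UPB, and invoke the proof of Theorem~\ref{theorem3} for $\mathcal{S}^\ast$ to obtain an entanglement witness $W\propto\Pi_{\mathcal{S}^\ast}-\epsilon\mathbbm{1}_{\mathcal{H}}$ satisfying $\mathrm{Tr}(W\,\Pi_{\mathcal{S}^\ast})>1$. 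Decomposing $\Pi_{\mathcal{S}^\ast}=\Pi_{\mathcal{S}}+\Pi_{\mathcal{S}^\ast\setminus\mathcal{S}}$ and using that each deleted $\proj{\psi_j}$ with $j\in\mathcal{S}^\ast\setminus\mathcal{S}$ is fully product, the entanglement-witness property of $W$ should bound $\mathrm{Tr}(W\,\Pi_{\mathcal{S}^\ast\setminus\mathcal{S}})$ in the favorable direction, so that $\mathrm{Tr}(W\,\Pi_{\mathcal{S}})$ still exceeds $1$. Translating back via the Hadley correspondence between entanglement witnesses and nonsignalling correlations would then give NC violating the Bell inequality for $\mathcal{S}$.

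The main obstacle is making this transfer step in case (c) rigorous. Beyond the sign-tracking argument for $\mathrm{Tr}(W\,\Pi_{\mathcal{S}^\ast\setminus\mathcal{S}})$, one must address a scenario mismatch: extending $\mathcal{S}$ to $\mathcal{S}^\ast$ may introduce local vectors that create new measurement settings at some sites, so the Bell polytopes for $\mathcal{S}$ and $\mathcal{S}^\ast$ live in different ambient spaces. One must verify that the nonsignalling correlation produced by the witness construction in the larger scenario descends consistently to the smaller scenario (e.g.~by marginalizing over the additional settings) and that this marginalization preserves the violation of the $\mathcal{S}$-inequality. If both of these issues are handled, the biconditional follows immediately; this is why the corollary can legitimately be presented as a direct consequence of Theorems~\ref{theorem2} and~\ref{theorem3}.
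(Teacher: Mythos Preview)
Your trichotomy is exactly right, and it matches what the paper has in mind: the corollary is stated without proof, as an ``immediate consequence'' of Theorems~\ref{theorem2} and~\ref{theorem3}. Cases~(a) and~(b) are handled directly by those theorems, and you correctly identify case~(c) as the one needing additional work that the paper itself does not supply.

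Your proposed argument for case~(c), however, has a genuine gap that is not just a matter of sign bookkeeping. If $W$ is to represent valid nonsignalling correlations in the sense of Ref.~\cite{Hadley}, then for every product vector $|\psi_j\rangle$ appearing among the measurement projectors one must have $\mathrm{Tr}\bigl(W\,|\psi_j\rangle\langle\psi_j|\bigr)\geq 0$, since this number is precisely the conditional probability $p(\boldsymbol{a}_j|\boldsymbol{x}_j)$. Hence $\mathrm{Tr}\bigl(W\,\Pi_{\mathcal{S}^\ast\setminus\mathcal{S}}\bigr)\geq 0$, and therefore
\[
\mathrm{Tr}\bigl(W\,\Pi_{\mathcal{S}}\bigr)=\mathrm{Tr}\bigl(W\,\Pi_{\mathcal{S}^\ast}\bigr)-\mathrm{Tr}\bigl(W\,\Pi_{\mathcal{S}^\ast\setminus\mathcal{S}}\bigr)\leq \mathrm{Tr}\bigl(W\,\Pi_{\mathcal{S}^\ast}\bigr),
\]
which is the \emph{unfavourable} direction. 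Knowing $\mathrm{Tr}(W\,\Pi_{\mathcal{S}^\ast})>1$ tells you nothing about whether $\mathrm{Tr}(W\,\Pi_{\mathcal{S}})>1$: the violating NC may place substantial probability on the events indexed by $\mathcal{S}^\ast\setminus\mathcal{S}$, and once those terms are dropped the remaining sum need not exceed~$1$. The entanglement-witness positivity cannot rescue this, because nonnegativity on product projectors is exactly what forces the inequality the wrong way. (The scenario mismatch you worry about is a secondary issue by comparison; even if the scenarios coincided, the transfer would still fail for this reason.) To handle case~(c) one would need either a different choice of violating NC---not the one inherited verbatim from $\mathcal{S}^\ast$---or a separate argument altogether; neither the paper nor your proposal provides one.
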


It should also be noted that one can easily increase the violation
of~(\ref{BellIneq}) by the $W$ from~(\ref{witness}) with
$\epsilon$ replaced by
\begin{equation}\label{espilon2}
\epsilon'=\min_{\ket{\psi_{\mathrm{prod}}}\in \mathcal{S}'}\langle
\psi_{\mathrm{prod}}|\Pi_{\mathcal{S}}|\psi_{\mathrm{prod}}
\rangle,
\end{equation}
where $\mathcal{S}'$ is a finite set of product vectors
constructed by taking all the possible tensor products of vectors
from the local sets $\mathcal{S}^{(i)}$. In other words,
$\mathcal{S}'$ is a set of product vectors representing local
measurement operators and allowing one to obtain NC from $W$. The
nonnegativity condition of these conditional probabilities leads
to~(\ref{espilon2}). For instance, for the Bell
inequality~(\ref{Ex1}) one can put $\epsilon'=1/8$ and get a
violation $7/6$ which is still, however, less than the maximal
nonsignalling violation of $4/3$~\cite{Mafalda}.

\section{Guess your neighbour's input Bell inequalities and
$n$-qubit UPBs}\label{GYNI}

Our construction connects UPBs and nontrivial Bell inequalities.
It is now tempting to ask whether it is possible to get {\it
tight} Bell inequalities in this way. In Ref.~\cite{my}, some of
us showed that the recently introduced tight Bell inequalities
\cite{Mafalda} correspond to a new class of many-qubit UPBs. It is
still, however, unknown whether these Bell inequalities are tight
for an arbitrary number of parties. In Ref.~\cite{Mafalda},
tightness was verified only for $3\leq n\leq 7$. Here, we prove
tightness for all odd $n\geq 3$.

Let us consider a particular example of games described above (cf.
Sec.~\ref{NC}). Assume that we have $n$ parties and each of them
is given a bit $x_i\in\{0,1\}$, forming a string of settings
$\boldsymbol{x}$. The overall goal is that every party guesses the
next party's bit $x_{i+1}$, thus the name {\it guess your
neighbor's input} (GYNI)~\cite{Mafalda}. Additionally, we demand
that all strings $\boldsymbol{x}$ of settings $x_i$ are randomly
chosen, with equal probabilities, from those which satisfy
\begin{equation}\label{rules}
\left\{
\begin{array}{ll}
x_1\oplus \ldots \oplus x_n=0, \,&\, \mathrm{for}\,\,\mathrm{odd}\,\,n\\[1ex]
x_2\oplus \ldots \oplus x_n=0, \,&\,
\mathrm{for}\,\,\mathrm{even}\,\,n
\end{array}
\right..
\end{equation}
For given NC $p(\boldsymbol{a}|\boldsymbol{x})$, the probability of success in this game is given by
\begin{equation}\label{Tight}
P_{\mathrm{succ}}=\frac{1}{2^{n-1}}\sum_{i}p(\widehat{\boldsymbol{x}}_{i}|\boldsymbol{x}_{i}),
\end{equation}
where for a string of settings $\boldsymbol{x}$, the notation
$\widehat{\boldsymbol{x}}$ stands for the string of settings with
$\widehat{x}_i=x_i$ and the sum goes over all input vectors
$\boldsymbol{x}_i$ whose components obey~(\ref{rules}). (Note that
the index in $\boldsymbol{x}_i$ now enumerates the strings of
settings instead of the components of such a string.) It was shown
in~\cite{Mafalda} that the maximal value of~(\ref{Tight}) over all
classical strategies is
$P_{\mathrm{succ}}^{\mathrm{CC}}=1/2^{n-1}$. This, after simple
algebra, leads us to the Bell inequalities which can be written as
\begin{equation}\label{BellIneqOdd}
\sum_{k=0}^{(n-1)/2} \sum_{1=i_1<\ldots<i_{2k}}^{n}D_{i_1\ldots
i_{2k}}p(\boldsymbol{0}|\boldsymbol{0})\leq 1,
\end{equation}
and
\begin{equation}\label{BellIneqEven}
\sum_{k=0}^{(n-2)/2} \sum_{2=i_1<\ldots<i_{2k}}^{n}\hspace{-0.3cm}D_{i_1\ldots
i_{2k}}[p(\boldsymbol{0}|\boldsymbol{0})+p(0\ldots 01|10\ldots 0)]\leq 1,
\end{equation}
for odd and even $n$, respectively. Here
$\boldsymbol{0}=(0,\ldots,0)$ and $D_{i_1,\ldots,i_k}$ flips
($0\leftrightarrow 1$) input bits and output bits at positions
$i_1,\ldots,i_k$ and $i_1-1,\ldots,i_k-1$ (if $i_1=1$ then
$i_1-1=n$), respectively. We call this inequality GYNI$_n$.

Notice that for $n=3$, Eq.~(\ref{BellIneqOdd}) reproduces the Bell
inequality (\ref{Ex1}), while for $n=4$, Eq.~(\ref{BellIneqEven}) gives
\begin{eqnarray}
&&\hspace{-0.8cm}p(0000|0000)+p(0001|1000)+p(0110|0011)\nonumber\\
&&\hspace{-0.8cm}+p(0111|1011)+p(1010|0101)+p(1011|1101)\\
&&\hspace{-0.8cm}+p(0111|1011)+p(1100|0110)+p(1101|1110)\leq 1\nonumber.
%\nonumber\\
\end{eqnarray}

It was shown in Ref.~\cite{Mafalda} that GYNI$_n$ cannot be
violated by quantum theory, meaning that the maximum of the
probability $P_{\mathrm{succ}}$~(\ref{Tight}) over all quantum
strategies is exactly the same as
$P_{\mathrm{succ}}^{\mathrm{CC}}$, i.e.,
$P_{\mathrm{succ}}^{\mathrm{QC}}=P_{\mathrm{succ}}^{CC}=1/2^{n-1}$.
Moreover, these inequalities are tight for $3\leq n\leq
7$~\cite{Mafalda}, providing the first examples of tight Bell
inequalities with no quantum violation.

It is then interesting to study these inequalities from the point
of view of our construction. Below, we will show that the product
vectors corresponding to the Bell inequalities~(\ref{BellIneqOdd})
and~(\ref{BellIneqEven}) constitute an $n$-qubit UPB. For this
purpose, first notice that the conditional probabilities have two
possible settings inputs and outcomes for each party. This means
that the Hilbert space supporting the product vectors is
$\mathcal{H}=(\mathbbm{C}^2)^{\ot n}$ and at each site we have two
bases, which for simplicity we can take to be the same for all
sites and given as $\mathcal{S}_0=\{\ket{0},\ket{1}\}$ and
$\mathcal{S}_1=\{\ket{e},\ket{\overline{e}}\}$ with $\ket{e}$
being different from $\ket{0}$ and $\ket{1}$. Clearly, one can
consider bases that are different at each site leading to more
general UPBs.

Let now $V$ denote the unitary operator mapping $\mathcal{S}_0$ to
$\mathcal{S}_1$, that is $V\ket{0}=\ket{e}$ and
$V\ket{1}=\ket{\overline{e}}$; we write $V_i$ for an application
of $V$ on the $i$th qubit. Then, following the rules described
above, one sees that the product vectors corresponding
to~(\ref{BellIneqOdd}) read, with $\sigma_i$ standing for the
Pauli operator $\sigma_x|0\rangle=|1\rangle$ and
$\sigma_x|1\rangle=|0\rangle$ acting on qubit $i$,
\begin{align}
\begin{split}
\label{vectorsOdd}
V_{i_1}\ldots
V_{i_k}\sigma_{i_{1}-1}\ldots&\sigma_{i_{k}-1}\ket{0}^{\ot
n},\\
i_{1}<\ldots<i_{k}\in\{1,\ldots,n\},\quad
 k = 0, & 2,4,\ldots,n-1,
\end{split}
\end{align}
while those corresponding to~(\ref{BellIneqEven}) read
\begin{align}
\begin{split}
\label{vectorsEven}
V_{i_1}\ldots V_{i_k} &\sigma_{i_{1}-1}\ldots \sigma_{i_{k}-1}\ket{0}^{\ot
n},\\
V_1V_{i_1}\ldots V_{i_k} &\sigma_{i_{1}-1}\ldots\sigma_{i_{k}-1}\sigma_n\ket{0}^{\ot
n},\\
i_{1}<\ldots<i_{k}\in \{2,&\ldots,n\},\quad k=0, 2,4,\ldots,n-2.
\end{split}
\end{align}
Let us denote the set of these vectors by $U_n$. We will show in
Theorem~\ref{GYNIUPB} that $U_n$ is a UPB.

In the particular cases of $n=3$, Eq.~(\ref{vectorsOdd}) give the
Shifts UPB~(\ref{Schifts}). For $n=4$, Eq.~(\ref{vectorsEven})
gives
\begin{eqnarray}
U_4&=&\{\ket{0000},\ket{1\overline{e}e0},\ket{e001},\ket{\overline{e}\overline{e}e1},\nonumber\\
&&\ket{01\overline{e}e},
\ket{1e1e},\ket{e1\overline{e}\overline{e}},\ket{\overline{e}e1\overline{e}}\}.
\end{eqnarray}
A direct check shows that $U_4$ is also a UPB in
$(\mathbbm{C}^2)^{\ot 4}$. Below we show that this is the case for
any $n$. To this end, first let us state two simple facts.

%LEMMA 1
\begin{lem}\label{Lemma1}
Let $U$ be an $n$-qubit UPB. Then, the set $\widetilde{U}$
obtained from $U$ by substituting some local basis at site $i$ by
some other basis different from the other local bases at this site
is also a UPB.
\end{lem}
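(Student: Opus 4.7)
Denote the basis being replaced at site $i$ by $B_{k}=\{\ket{a},\ket{a^{\perp}}\}$, its replacement by $B_{k}'=\{\ket{b},\ket{b^{\perp}}\}$, and the bijection used to define $\widetilde U$ by $\ket{a}\mapsto\ket{b}$, $\ket{a^{\perp}}\mapsto\ket{b^{\perp}}$. Two facts need to be checked: (i) $\widetilde U$ is still a set of $|U|<\mathrm{dim}\,\mathcal H$ mutually orthogonal product vectors, and (ii) $(\mathrm{span}\,\widetilde U)^{\perp}$ contains no product vector. Point (i) is immediate: the cardinality is preserved, local vectors at sites $j\neq i$ are unchanged, and the replacement at site $i$ is a bijection that preserves both equality and orthogonality within the basis, so any two vectors of $U$ that were orthogonal at some site remain orthogonal in $\widetilde U$ at the same site.

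For point (ii), the plan is to argue by contrapositive. Given any product vector $\ket{\phi}=\bigotimes_{j}\ket{\phi_{j}}$ orthogonal to every element of $\widetilde U$, I will construct a product vector $\ket{\phi'}=\bigotimes_{j}\ket{\phi_{j}'}$ orthogonal to every element of $U$, contradicting the unextendibility of $U$. I take $\ket{\phi_{j}'}=\ket{\phi_{j}}$ for $j\neq i$, and I set $\ket{\phi_{i}'}=\ket{\phi_{i}}$ when $\ket{\phi_{i}}\notin B_{k}'$, while if $\ket{\phi_{i}}\in B_{k}'$ I let $\ket{\phi_{i}'}$ be the preimage of $\ket{\phi_{i}}$ under the bijection $B_{k}\to B_{k}'$.

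Verifying $\ket{\phi'}\perp\ket{u_{l}}$ for an arbitrary $\ket{u_{l}}\in U$ reduces to examining where the orthogonality of $\ket{\phi}$ and the corresponding $\ket{\tilde u_{l}}\in\widetilde U$ is witnessed: if it is witnessed at some site $j\neq i$, the same site works for $\ket{\phi'}$ and $\ket{u_{l}}$ because the relevant local components coincide; if only site $i$ works, then $\ket{\phi_{i}}$ lies in the local basis of $\widetilde U$ at site $i$ that contains $\ket{\tilde u_{l}^{(i)}}$, and the two subcases above handle it. The main obstacle---really the only nontrivial point---is that $\ket{\phi_{i}}$ might a priori lie in two different such local bases, in which case the bijection replacement would be incompatible with the orthogonalities at the other basis. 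This is precisely what the hypothesis of the lemma forbids: two distinct orthonormal bases of $\C^{2}$ cannot share a vector, since the orthogonal complement of any unit vector is unique up to phase, so once $B_{k}'$ is assumed different from every other local basis at site $i$ the vector $\ket{\phi_{i}}$ belongs to at most one of them, making the choice of $\ket{\phi_{i}'}$ unambiguous.
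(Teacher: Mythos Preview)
Your proof is correct and is essentially an explicit unpacking of the paper's one-line argument. The paper simply invokes property~(P)---that in the qubit case the UPB property depends only on the orthogonality pattern of the local vectors, not on the specific bases---whereas you carry out the direct construction of a product vector orthogonal to $U$ from one orthogonal to $\widetilde{U}$; the content is the same, yours is just spelled out in full detail.
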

\begin{proof}
The proof is trivial. As said before (cf. Sec.~\ref{constr}),
every $n$-qubit UPB has the property (P), or, in other words, the
property of being UPB in qubit Hilbert spaces is independent of
the choice of local bases. Hence, by replacing any local basis
with any other independent basis we just get another UPB.
\end{proof}

%LEMMA 2
\begin{lem}\label{Lemma2}
Let $U_1$ and $U_2$ be two
$n$-qubit UPBs. Assume that both can be divided into $k$ subsets
$U_1^{(i)}$ and $U_{2}^{(i)}$ $(i=1,\ldots,k)$ obeying the
following orthogonality rules
\begin{equation}\label{OrtCond}
U_1^{(i)}\perp U_{2}^{(j)}\qquad (i,j=1,\ldots,k;\;i\neq j).
\end{equation}
Then the set of vectors
\begin{equation}\label{NextUPB}
\begin{array}{cc}
\ket{0}\ot U_1^{(1)}, \quad & \quad \ket{1}\ot U_2^{(1)},\\[1ex]
\ket{e_1}\ot U_1^{(2)}, \quad & \quad \ket{\overline{e}_1}\ot U_2^{(2)},\\[1ex]
\vdots & \vdots\\[1ex]
\ket{e_{k-1}}\ot U_1^{(k)}, \quad & \quad \ket{\overline{e}_{k-1}}\ot U_2^{(k)},
\end{array}
\end{equation}
where $\{\ket{0},\ket{1}\}$ and
$\{\ket{e_i},\ket{\overline{e}_i}\}$ $(i=1,\ldots,k-1)$ are
different bases in $\mathbbm{C}^2$, constitutes
a $(n+1)$-qubit UPB.
\end{lem}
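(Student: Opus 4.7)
The plan is to verify the two defining conditions for $\widetilde U$, the $(n+1)$-qubit set listed in (\ref{NextUPB}), to be a UPB: mutual orthogonality, and the absence of product vectors in the orthogonal complement of its span.

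For the orthogonality part, I would organize a case analysis by which rows of (\ref{NextUPB}) the two vectors being compared come from. Two vectors in the same row are orthogonal either on the first qubit (if they sit in opposite columns, via $\ket{0}\perp\ket{1}$ or $\ket{e_{i-1}}\perp\ket{\overline{e}_{i-1}}$) or on the remaining $n$ qubits (if they share the same column, by the internal orthogonality of $U_1$ or of $U_2$ as UPBs). Two vectors in different rows $i\neq j$ may well have non-orthogonal first qubits, but the orthogonality of their $n$-qubit tails follows from one of three facts: $U_1^{(i)}\perp U_1^{(j)}$ or $U_2^{(i)}\perp U_2^{(j)}$ (both because $U_1,U_2$ are internally orthogonal), or $U_1^{(i)}\perp U_2^{(j)}$, which is exactly the hypothesis (\ref{OrtCond}).

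For unextendibility, I would suppose toward contradiction that a product vector $\ket{\alpha}\ot\ket{\beta}$ with $\ket{\alpha}\in\C^2$ is orthogonal to every element of $\widetilde U$. The crucial observation is that $k$ pairwise distinct orthonormal bases of $\C^2$ yield $2k$ pairwise distinct one-dimensional subspaces---any two such bases sharing a vector would share its orthogonal complement as well and hence coincide. Consequently $\ket{\alpha}$ is orthogonal to at most one of the $2k$ first-qubit vectors $\ket{0},\ket{1},\ket{e_1},\ket{\overline{e}_1},\ldots,\ket{e_{k-1}},\ket{\overline{e}_{k-1}}$. For each row $i$ with first-qubit basis $\{\ket{v_i},\ket{v_i^\perp}\}$, orthogonality of $\ket{\alpha}\ot\ket{\beta}$ to $\ket{v_i}\ot u$ and $\ket{v_i^\perp}\ot u$ forces $\la\beta|u\ra=0$ for every $u\in U_1^{(i)}$ whenever $\la\alpha|v_i\ra\neq 0$, and similarly for $U_2^{(i)}$ whenever $\la\alpha|v_i^\perp\ra\neq 0$. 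Splitting into cases: if $\ket{\alpha}$ is orthogonal to no first-qubit vector, then $\ket{\beta}$ annihilates every element of $U_1\cup U_2$; if $\ket{\alpha}$ is orthogonal to some left-column vector $\ket{v_j}$, then $\ket{\beta}$ still annihilates every $U_2^{(i)}$ for $i\neq j$ as well as $U_2^{(j)}$, so $\ket{\beta}\perp U_2$; symmetrically, a right-column orthogonality yields $\ket{\beta}\perp U_1$. In every case $\ket{\beta}$ is a product vector orthogonal to one of the UPBs $U_1, U_2$, which contradicts the unextendibility of that UPB.

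The main obstacle is the bookkeeping in the unextendibility argument: one has to confirm that, regardless of which single first-qubit direction $\ket{\alpha}$ happens to be orthogonal to, the remaining rows are numerous enough to force orthogonality of $\ket{\beta}$ with all of $U_1$ or with all of $U_2$. The structure of (\ref{NextUPB}), in which $U_1^{(i)}$ sits uniformly in the left column and $U_2^{(i)}$ uniformly in the right column across all $i$, is precisely what enables this conclusion, in combination with the distinctness of the first-qubit bases.
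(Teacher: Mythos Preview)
Your proposal is correct and follows essentially the same route as the paper's proof: the paper dismisses mutual orthogonality as following ``directly from the assumptions'' and then argues unextendibility by the same case split on whether the first-qubit component $\ket{x}$ of a hypothetical orthogonal product vector belongs to one of the distinguished bases (equivalently, your split on whether $\ket{\alpha}$ is orthogonal to one of the $2k$ first-qubit vectors), concluding in each case that the $n$-qubit tail must be orthogonal to all of $U_1$ or all of $U_2$. Your write-up is simply a more detailed version of the same argument.
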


\begin{proof}
It follows directly from the assumptions that all vectors in this
set are mutually orthogonal. Now assume that the
vectors~(\ref{NextUPB}) are not a UPB, but that there exists a
product vector $\ket{\psi}\in(\mathbbm{C}^{2})^{\ot (n+1)}$
orthogonal to all of them. Writing this vector as
$\ket{\psi}=\ket{x}\ot\ket{\widetilde{\psi}}$ with
$\ket{\widetilde{\psi}}\in(\mathbbm{C}^{2})^{\ot n}$ and $\ket{x}$
being a one-qubit vector, $\ket{x}$ may or may not belong to one
of the local bases $\{\ket{0},\ket{1}\}$,
$\{\ket{e_i},\ket{\overline{e}_i}\}$ $(i=1,\ldots,k-1)$. If it
does, $\ket{\widetilde{\psi}}$ has to be orthogonal to one of the
sets $U_1$ or $U_2$, while if it does not,
$\ket{\widetilde{\psi}}$ must be orthogonal to both of them. In
either case, this contradicts the assumption that $U_1$ and $U_2$
are UPBs.
\end{proof}

\begin{thm}
\label{GYNIUPB} The product vectors given in
Eqs.~(\ref{vectorsOdd}) and~(\ref{vectorsEven}) constitute a
$2^{n-1}$-element UPB in $(\mathbbm{C}^2)^{\ot n}$.
\end{thm}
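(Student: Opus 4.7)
The plan is to establish the two defining properties of a UPB---mutual orthogonality and unextendibility---separately, leveraging an explicit parametrization of $U_n$. Each element $u_{\tilde I}\in U_n$ is indexed by a subset $\tilde I\subseteq\{1,\ldots,n\}$ whose characteristic vector $x=(x_1,\ldots,x_n)$ satisfies the parity condition~(\ref{rules}), and the local state at site $m$ depends only on the cyclic pair $(x_m,x_{m+1})$ (with $x_{n+1}=x_1$): it equals $|0\rangle,|1\rangle,|e\rangle$ or $|\overline{e}\rangle$ according as $(x_m,x_{m+1})=(0,0),(0,1),(1,0)$ or $(1,1)$. A direct count gives $|U_n|=2^{n-1}$, matching the claim.

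For mutual orthogonality, given two distinct valid $\tilde I,\tilde J$ set $S=\tilde I\triangle\tilde J$ and observe that $u_{\tilde I}$ and $u_{\tilde J}$ are orthogonal at site $m$ iff $m\notin S$ (so both qubits employ the same local basis) and $m+1\in S$ (so the two basis elements selected are opposite). The parity conditions in~(\ref{rules}) rule out $S=\{1,\ldots,n\}$: for odd $n$, $|S|$ is even hence $\ne n$; for even $n$, $|S\cap\{2,\ldots,n\}|$ is even hence $\ne n-1$. Since $S$ is a nonempty proper subset, its indicator sequence viewed cyclically on $\{1,\ldots,n\}$ must contain a $0\to 1$ transition, delivering the required site.

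For unextendibility, I would first attempt an inductive proof via Lemma~\ref{Lemma2}, with $n=3$ (the Shifts UPB) and $n=4$ (the explicit $U_4$ displayed above, checked by direct inspection) as base cases. The natural decomposition partitions $U_n$ by the state of the first qubit---which takes the four values $|0\rangle,|1\rangle,|e\rangle,|\overline{e}\rangle$ depending on the pair $(x_1,x_2)$---and Lemma~\ref{Lemma1} permits adjusting local bases. Because the two residual sets on qubits $2,\ldots,n$ obtained this way need not themselves be $(n-1)$-qubit UPBs, the hypotheses of Lemma~\ref{Lemma2} may fail to match directly; in that case I would pass to a direct combinatorial argument. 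Suppose $|\psi\rangle=\bigotimes_m|\psi_m\rangle$ is orthogonal to every $u_{\tilde I}$. Since $|0\rangle,|1\rangle,|e\rangle,|\overline{e}\rangle$ represent four distinct one-dimensional subspaces of $\mathbb{C}^2$, any given $|\psi_m\rangle$ is orthogonal to at most one of them, thereby activating at most one forbidden pair $(u_m^*,v_m^*)\in\{0,1\}^2$ at site $m$ via the correspondence above. Orthogonality of $|\psi\rangle$ to all $u_{\tilde I}$ then forces every valid-parity string $x$ to satisfy $(x_m,x_{m+1})=(u_m^*,v_m^*)$ for some activated~$m$.

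To finish, it suffices to prove the combinatorial claim that on the cycle $C_n$ with $n\geq 3$ and at most one forbidden pair per edge, some parity-correct string avoids every forbidden pair---contradicting the catch-all condition. I would establish this via the transfer-matrix method: setting $M_m=J-E_{(u_m^*,v_m^*)}$ (or $M_m=J$ if site $m$ is not activated) and $D=\mathrm{diag}(1,-1)$, the count of parity-correct avoiders equals $Z_0=\frac{1}{2}(Z_++Z_-)$ with $Z_\pm=\mathrm{tr}\bigl(\prod_m D^{\alpha_m}M_m\bigr)$, where $\alpha_m\in\{0,1\}$ encodes the parity sum from~(\ref{rules}). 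Using that each nontrivial $M_m$ has eigenvalues $\{\phi,-1/\phi\}$ or defective $\{1,1\}$ (depending on whether its forbidden pair lies on the diagonal) and $|\det M_m|=1$, together with the cyclic structure on $C_n$, one shows $Z_++Z_-\geq 2$ for every configuration, hence $Z_0\geq 1$. The main obstacle I expect is the regime in which all $n$ edges carry a forbidden pair, where the spectral bounds are tightest and the interplay between the cyclic parity sum in~(\ref{rules}) and the transfer-matrix trace must be handled carefully; direct enumeration for $n=3,4,5$ supplies strong corroboration and guides the general spectral argument.
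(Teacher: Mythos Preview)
Your parametrization and the orthogonality argument via the symmetric difference $S=\tilde I\triangle\tilde J$ are correct and elegant; the paper does not isolate orthogonality as a separate step, so this is a clean addition.

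For unextendibility, your first instinct---induction through Lemma~\ref{Lemma2} after partitioning by the first-qubit state---is precisely the paper's proof, and it does go through; you set it aside prematurely. The observation you are missing is that the two residual $(n{-}1)$-qubit sets \emph{are} UPBs. Pair the four first-qubit pieces as those behind $\{|0\rangle,|\overline{e}\rangle\}$ (the case $x_1=x_2$) versus those behind $\{|1\rangle,|e\rangle\}$ (the case $x_1\neq x_2$). In the first group put $y_j:=x_{j+1}$; since $x_1=x_2=y_1$, the last residual site carries the pair $(x_n,x_1)=(y_{n-1},y_1)$, so the cycle on the $y$-variables closes and the parity constraint on $x$ collapses to exactly condition~(\ref{rules}) for $n{-}1$ parties. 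Hence this residual set is literally $U_{n-1}$. The second group differs from $U_{n-1}$ only by a permutation of the local states at the last site (odd $n{-}1$) or last two sites (even $n{-}1$); this preserves the UPB property by Lemma~\ref{Lemma1}. The remaining hypothesis of Lemma~\ref{Lemma2}---the cross-orthogonalities $U_{n-1}^{(0)}\perp\widetilde{U}_{n-1}^{(0)}$ and $U_{n-1}^{(1)}\perp\widetilde{U}_{n-1}^{(1)}$---follows because within each $U_{n-1}^{(j)}$ any two distinct vectors are already orthogonal at some site untouched by the flip, so the flip preserves those orthogonalities and additionally makes every vector orthogonal to its own image. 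That is the whole argument.

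Your transfer-matrix alternative is a genuinely different route, and the reduction of unextendibility to the combinatorial claim on $C_n$ (every assignment of at most one forbidden pair per edge admits a parity-correct avoiding string) is correct and attractive. But as written it is incomplete: you assert $Z_++Z_-\ge 2$ without proof, and the regime you yourself flag---all $n$ edges carrying a forbidden pair, with non-commuting $M_m$'s of mixed types---is exactly where crude spectral bounds such as $|Z_-|\le Z_+$ degenerate to $Z_0\ge 0$. Since the inductive route via Lemma~\ref{Lemma2} is short and already works, finish that one.
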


\begin{proof}
We use induction on $n$. For the base case $n=3$,
$U_3=\mathcal{U}_{\mathrm{Shifts}}$ is already known to be a UPB.

For the induction step, we partition $U_{n+1}$ into the four subsets
\begin{equation}\label{n+1}
\begin{array}{cc}
\ket{0}\ot U_{n}^{(0)}, \quad & \quad \ket{1}\ot
\widetilde{U}_{n}^{(1)},\\[1ex]
\ket{\overline{e}} \ot U_{n}^{(1)}, \quad & \quad \ket{e}\ot
\widetilde{U}_n^{(0)},
\end{array}
\end{equation}
for certain $n$-qubit sets of vectors $U_{n}^{(0)}$,
$U_{n}^{(1)}$, $\widetilde{U}_n^{(0)}$ and
$\widetilde{U}_{n}^{(1)}$. We now establish the assertions of
Lemma~\ref{Lemma2}. The first observation is that
$$
U_n = U_n^{(0)} \cup U_n^{(1)},
$$
which follows from the definitions~(\ref{vectorsOdd})
and~(\ref{vectorsEven}) by considering the cases of even and odd
$n$ separately. This $U_n$ is known to be an $n$-qubit UPB by the
induction assumption.

Similarly, it can be shown that
$$
\widetilde{U}_n = \widetilde{U}_n^{(0)} \cup \widetilde{U}_{n}^{(1)}
$$
is the set which for odd $n$ one obtains from $U_n$ by replacing
$|0\rangle\leftrightarrow |1\rangle$ and
$|e\rangle\leftrightarrow\ket{\overline{e}}$ in the last qubit,
while for even $n$, by replacing $|0\rangle\leftrightarrow
|1\rangle$ and $|e\rangle\leftrightarrow\ket{\overline{e}}$ in the
second to last qubit and
$|0\rangle\leftrightarrow|\overline{e}\rangle$ and
$|1\rangle\leftrightarrow\ket{e}$ in the last one. In the
following, we denote this operation by $F_n$. An application of
Lemma~\ref{Lemma1} shows that $\widetilde{U}_n$ is also a UPB.

In order to complete the proof by an application of
Lemma~\ref{Lemma2}, we need to show the orthogonality relations
$$
U_n^{(0)}\perp \widetilde{U}_n^{(0)} ,\qquad U_n^{(1)}\perp \widetilde{U}_n^{(1)} .
$$
Since the proof of the second relation is analogous, we only prove
the first. Let us consider the cases of odd $n$ and even $n$
separately and start with odd $n$. As follows from
Eq.~(\ref{vectorsOdd}), the last qubit of elements of
$U_{n}^{(0)}$ is either $\ket{0}$ or $\ket{e}$, meaning that
orthogonality of any two vectors $\ket{\psi_j},\ket{\psi_k}\in
U_{n}^{(0)}$ $(j\neq k)$ comes from one of the first $n-1$ qubits.
Now, on the one hand, $F_n$ acts on the last qubit, and therefore
$\ket{\psi_j}\perp F_n\ket{\psi_k}\in\widetilde{U}_n^{(0)}$ for
any $k\neq j$. On the other hand, as $F_n$ replaces the last qubit
with an orthogonal one, $\ket{\psi_j}\perp
F_n\ket{\psi_j}\in\widetilde{U}_n^{(0)}$. These two facts together
imply that an arbitrary vector $\ket{\psi_j}\in U_n^{(0)}$ is
orthogonal to any elements of $\widetilde{U}_n^{(0)}$, hence
$U_n^{(0)}\perp \widetilde{U}_n^{(0)}$.

In the case of even $n$, the operation $F_n$ acts on the last two
qubits. Nevertheless, it follows from Eq.~(\ref{vectorsEven}) [cf.
Eq.~(\ref{BellIneqEven})] that the orthogonality of any two
vectors $\ket{\psi_j},\ket{\psi_k}\in U_n^{(0)}$ $(j\neq k)$ comes
from one of the first $n-2$ qubits. Precisely, as already stated,
they cannot be orthogonal at the last qubit. Assume then that they
are orthogonal at the second to last one. In this case, the
corresponding conditional probabilities have the same input bits
at this site but different at the last one. Taking into account
Eq.~(\ref{rules}), this means that at some other site $k$
($1<k<n-2$), these conditional probabilities have again different
inputs, but at site $k-1$ their inputs are equal. Consequently,
due to the fact that the outputs of these conditional
probabilities are inputs shifted to the left, they must be
``orthogonal'' at one of the first $n-2$ sites. Hence, the
orthogonality of the corresponding vectors $\ket{\psi_j}$ and
$\ket{\psi_k}$ comes from one of the first $n-2$ qubits. Having
this, we just apply the reasoning developed for odd $n$,
concluding that $U_n^{(0)}\perp\widetilde{U}_n^{(0)}$.
\end{proof}

\begin{thm}
\label{gynitight}
For odd $n$, the $\mathrm{GYNI}_n$ Bell inequalities~(\ref{BellIneqOdd}) are tight.
\end{thm}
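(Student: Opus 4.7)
The plan is to verify tightness by exhibiting $d = 3^n - 1$ affinely independent classical deterministic strategies that saturate the inequality~(\ref{BellIneqOdd}), where $d$ is the dimension of the nonsignalling polytope in the $n$-party, $2$-input, $2$-output scenario. Every deterministic local strategy is determined by local functions $f_i:\{0,1\}\to\{0,1\}$, so the task reduces to a linear-algebraic rank computation on the incidence matrix of these strategies against the coordinates of $\mathbbm{R}^D$.

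The first step is to characterize saturating deterministic strategies. By the orthogonality observation used in the proof of Theorem~\ref{theorem1}, no two of the $2^{n-1}$ terms of~(\ref{BellIneqOdd}) can simultaneously evaluate to $1$ on a deterministic strategy, so saturation means that exactly one valid input $\boldsymbol{x}$ with $\bigoplus_i x_i=0$ satisfies $f_i(x_i)=x_{i+1}$ for all $i$. Focusing on the canonical term $p(\boldsymbol{0}|\boldsymbol{0})$, this forces $f_i(0)=0$ at every site, while the values $f_i(1)\in\{0,1\}$ are free apart from the ``uniqueness'' condition that no other valid $\boldsymbol{x}$ is also won. This yields a combinatorial description of the family of saturating strategies attached to that term.

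The second step is to exploit the manifest symmetries of GYNI$_n$: the inequality is invariant under cyclic shifts of the parties and under the bit-flip operators $D_{i_1\ldots i_{2k}}$ used to define~(\ref{BellIneqOdd}). Applying these symmetries to the canonical family produces analogous families of saturating strategies for each of the $2^{n-1}$ terms. I would then collect all these strategies and show that their affine span has dimension $d-1 = 3^n-2$. A natural way to organize the linear-algebra verification is to label coordinates of $\mathbbm{R}^D$ by the free parameters $(x_1,\ldots,x_n)$ taking values in $\{0,1,\star\}$ (with $\star$ denoting an unmeasured party) and to show that for every such coordinate one can produce a pair of saturating strategies differing only at that coordinate.

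The main obstacle is this rank step. A direct check is intractable for general $n$, so the cleanest route is induction on odd $n$, with base case $n=3$ already known~\cite{Mafalda}. The inductive step would fix two adjacent parties to an input/output pattern consistent with GYNI$_{n-2}$ and use this to reduce the marginal problem to tightness at $n-2$; the ``new'' saturating strategies created by allowing the two extra parties to vary should then contribute exactly $3^n-3^{n-2}$ additional affinely independent directions. The delicate part is to match these counts precisely, and to see where the parity constraint (\ref{rules}) available only for odd $n$ is essential — for even $n$, the analogous construction is expected to fail, which is consistent with the statement of Theorem~\ref{gynitight} being restricted to odd $n$.
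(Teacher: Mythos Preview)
Your proposal is an outline with the decisive step left open, and the gap is not merely technical. The inductive reduction from $n$ to $n-2$ that you sketch is not obviously viable: GYNI$_n$ has a cyclic structure (party $i$ guesses party $i{+}1$'s input modulo $n$), and fixing two adjacent parties breaks the cycle rather than shortening it, so the residual problem on the remaining $n-2$ parties is not a copy of GYNI$_{n-2}$. You would need to exhibit an explicit embedding of the saturating strategies of GYNI$_{n-2}$ into those of GYNI$_n$ and then manufacture the missing $3^n-3^{n-2}$ independent directions by hand; neither piece is routine, and you give no mechanism for either. A smaller point: once $f_i(0)=0$ for all $i$, the values $f_i(1)$ are entirely free --- the ``uniqueness condition'' you mention is automatic by the orthogonality argument of Theorem~\ref{theorem1}, so there is no additional constraint there.

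The paper takes the dual route and avoids both the rank computation and induction on $n$. Encoding each local deterministic strategy as one of $\{0,1,i,f\}$ (constant $0$, constant $1$, identity, flip), the single-party identity $[0]+[1]=[i]+[f]$ generates linear relations among global strategies. Calling two strategies \emph{congruent} when their difference lies in the span of the saturating ones, the paper proves in three short combinatorial steps that every non-saturating strategy is congruent to the single strategy $[1\,1\cdots 1]$; this immediately gives codimension one. The key device is an ``evaluation'' map that propagates a constant entry through the $i$'s and $f$'s to produce a constant string, together with a characterization of saturation: a strategy saturates iff it lies in $\{i,f\}^n$ with an even number of $f$'s, or contains a constant entry and evaluates to an even-parity string. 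The odd-$n$ hypothesis enters only through these parity statements, not through any structural limitation of the method --- so your expectation that the construction ``should fail'' for even $n$ is misplaced; tightness is in fact verified numerically for even $n\leq 7$ in~\cite{Mafalda}, and the restriction to odd $n$ here reflects what this particular argument handles, not the truth of the claim.
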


\begin{proof}
The proof is moved to the appendix.
\end{proof}

Let us finally notice that, as shown in Ref.~\cite{Mafalda}, the
maximal NC value of the GYNI Bell inequalities is, irrespectively
of the number of parties, always upper bounded by 2. In general,
however, one is able to find Bell inequalities constructed from
UPBs having arbitrarily large nonsignalling violation. Let us
consider for this purpose an $n$-partite UPB $U$ and its
associated Bell inequality~(\ref{BellIneqEqual}). It is violated
by the witness~(\ref{witness}) constructed from $U$. We write
$\beta>1$ for the value of the Bell inequality on $W$.

Then it follows from Ref.~\cite{UPBhuge} that the $k$-fold tensor
product of $U$, i.e., $U^{\ot k}$, is a $kn$-partite UPB. Via our
construction, this new UPB turns into a $kn$-partite Bell
inequality~(\ref{BellIneqEqual}), which is the $k$-fold product of
the $n$-partite Bell inequality associated to $U$. Nevertheless,
its value on the $kn$-partite witness $W^{\ot k}$ amounts to
$\beta^{k}$, which in the limit of large $k$ becomes arbitrarily
large. Therefore, the NC violation of Bell inequalities
constructed from UPBs is unbounded and may grow exponentially in
the number of parties.

\section{New tight Bell inequalities with no quantum violation from UPBs}\label{New}

The purpose of this section is to provide new examples of tight
Bell inequalities without quantum violation from UPBs. In order to
do this, we first construct new examples of four-qubit UPBs, and
later map them into Bell inequalities. Then, we discuss simple
methods allowing one to extend $n$-partite UPBs to $(n+1)$-partite
ones. Interestingly, these methods, on the level of Bell
inequalities, allow us to construct $(n+1)$-partite Bell
inequalities from $n$-partite ones; our computations show that
this construction frequently produces tight Bell inequalities.

\subsection{The four-qubit UPBs}

Recall that in the case of two qubits, there are no UPBs. Moving
to three qubits, it is known that there is only one class of
UPBs~\cite{Bravyi}, which we have found in example~\ref{Ex1} to
give rise to the GYNI$_3$ inequality, the only tight and
nontrivial Bell inequality with no quantum violation in the
tripartite scenario with two settings and two
outcomes~\cite{Mafalda}. Thus, the first still unexplored case to
be analyzed is the four-qubit case.

In order to search for four-qubit UPBs, we used a brute-force
numerical procedure. In this way, we found many new examples of
UPBs, some of which allowed us to construct nontrivial tight Bell
inequalities. In particular, we have found four-qubit UPBs
providing tight Bell inequalities with two settings per site which
are inequivalent to GYNI$_4$~(\ref{BellIneqEven}). We have also
obtained some Bell inequalities with two and three observables at
some sites.

In Table~\ref{TableI}, we have collected some of these UPBs.
Table~\ref{TableII} shows the corresponding Bell inequalities and
their maximal violations by NC, as computed by linear programming.
All the first nine Bell inequalities are tight. For the sake of
completeness, we also present a four-qubit UPB, denoted
$\mathcal{U}_{10}$, leading to a nontight Bell inequality. Notice
that
$\mathcal{U}_{10}=\{\ket{0}\ot\mathcal{U}_{\mathrm{Shifts}},\ket{1}\ot\mathcal{U}_{\mathrm{Shifts}}\}$,
i.e., it is just a tensor product of the standard basis in
$\mathbbm{C}^2$ and the three-qubit Shifts UPB~(\ref{Schifts}),
which implies that the associated Bell inequality has only one
observable at the first site.

In Tables~\ref{TableI} and~\ref{TableII}, by $(i,j,k,l)$ we denote
the number of bases/observables per site (recall that we always
have two outcomes) and by $\{\ket{0},\ket{1}\}$,
$\{\ket{e},\ket{\overline{e}}\}$, and
$\{\ket{f},\ket{\overline{f}}\}$ we denote the independent local
bases (in the sense of the property (P)).
\begin{widetext}

\begin{table}[h!]
\begin{tabular}{c|l}
\hline
$\#$ bases per site & UPB\\
\hline\hline
(2,2,2,2) &\, $\mathcal{U}_1=\{\ket{0000},\ket{1e\overline{e}0},
\ket{e\overline{e}10},\ket{\overline{e}1ee},\ket{0001},
\ket{01\overline{e}1},\ket{1\overline{e}0\overline{e}},\ket{0011},\ket{1011}\}$
\\

(2,2,2,2) &\,
$\mathcal{U}_2=\{\ket{0000},\ket{\overline{e}e1e},\ket{e1e1},
\ket{\overline{e}11\overline{e}},\ket{ee\overline{e}1},
\ket{1\overline{e}\overline{e}e},\ket{10e\overline{e}},
\ket{\overline{e}10\overline{e}},\ket{e1e0}\},$\\

(2,2,2,2) &\,
$\mathcal{U}_3=\{\ket{0000},\ket{1e\overline{e}0},\ket{e\overline{e}10},\ket{\overline{e}1ee},\ket{0001},
\ket{0011},\ket{1001},\ket{1011},\ket{010\overline{e}},
\ket{11\overline{e}1}\}$\\

(2,2,2,1) & \, $\begin{array}{l}
\hspace{-0.12cm}\mathcal{U}_4=\{\ket{0000},\ket{1\overline{e}e0},\ket{e1\overline{e}10},\ket{\overline{e}e10},\ket{0001},\ket{0011},
\ket{0101},\ket{0111},\ket{1001},\ket{1011},\ket{1101},\ket{1111}\}\nonumber\\
\hspace{0.32cm}=\{\mathcal{U}_{\mathrm{Shifts}}\ot\ket{0},\mathcal{B}\ot\ket{1}\},
\end{array}$\\

\hline

(2,2,2,3) &\,
$\mathcal{U}_5=\{\ket{0000},\ket{1eee},\ket{e1\overline{e}f},\ket{0ee1},
\ket{101\overline{e}},\ket{1\overline{e}0\overline{f}},\ket{\overline{e}\overline{e}1e}\},$\\

(2,2,2,3) & \, $\mathcal{U}_6=\{\ket{0000},\ket{1e\overline{e}e},
\ket{e\overline{e}1\overline{e}},\ket{\overline{e}1ef},
\ket{0e\overline{e}1},\ket{1e\overline{e}\overline{e}},
\ket{e1ee},\ket{\overline{e}\overline{e}1\overline{f}}\},$\\

(2,2,2,3) &\, $\mathcal{U}_7=\{\ket{0000},\ket{
\overline{e}ee1},\ket{e11e},
\ket{\overline{e}1\overline{e}\overline{f}},\ket{1000},\ket{e001},
\ket{e10\overline{e}},\ket{e010},\ket{e011},
\ket{e11\overline{e}},\ket{\overline{e}\overline{e}1f},\ket{e10e}\},$\\

\hline

(2,2,3,3) &\,
$\mathcal{U}_8=\{\ket{0000},\ket{e1e\overline{e}},
\ket{1e\overline{e}f},\ket{\overline{e}\overline{e}\overline{f}1},
\ket{e01\overline{f}},\ket{01fe}\}$\\

(2,2,3,3) &\,
$\mathcal{U}_9=\{\ket{0000},\ket{1eee},\ket{e\overline{e}1f},\ket{10\overline{e}\overline{f}},\ket{0ef1},
\ket{01\overline{f}\overline{f}},\ket{1\overline{e}0f},\ket{1ee
\overline{e}},\ket{1\overline{e}e\overline{f}},
\ket{1e\overline{e}f},\ket{11\overline{e}\overline{f}}
,\ket{\overline{e}\overline{e}1f}\}$\\

\hline\hline

(1,2,2,2) &\,
$\mathcal{U}_{10}=\{\ket{0000},\ket{01\overline{e}e},\ket{0e1\overline{e}},\ket{0\overline{e}e1},
\ket{1000},\ket{11\overline{e}e},\ket{1e1\overline{e}},\ket{1\overline{e}e1}\}=
\{\ket{0}\ot\mathcal{U}_{\mathrm{Shifts}},\ket{1}\ot\mathcal{U}_{\mathrm{Shifts}}\}$

\end{tabular}
\caption{New four-qubit UPBs $\mathcal{U}_i$ $(i=1,\ldots,9)$
leading to tight nontrivial Bell inequalities with no quantum
violation. Notice, that $\mathcal{U}_4$ takes the simple form
$\{\mathcal{U}_{\mathrm{Shifts}}\ot\ket{0},\mathcal{B}\ot\ket{1}\}$
with $\mathcal{B}$ denoting the standard basis in
$(\mathbbm{C}^{2})^{\ot 3}$. For completeness we also present a
UPB $\mathcal{U}_{10}$ for which the associated Bell inequality is
not tight. The first column contains the number of independent
bases per site. One can check by hand that each $\mathcal{U}_i$ is
a set of mutually orthogonal product vectors and that there is no
other product vector orthogonal to all of them.}~\label{TableI}
\end{table}
%

%\end{widetext}
%
%\begin{widetext}

\begin{table}[h!]
\begin{tabular}{c|c|l}
  \hline
UPB    & $\#$ observables per site & Bell inequality
\\
  \hline\hline

%UPB1

{\scriptsize $\mathcal{U}_1$} & (2,2,2,2)

&

{\scriptsize $\begin{array}{l}
p(0000|0000)+p(1010|0110)+p(0110|1100)
+p(1100|1011)+p(0001|0000)+p(0111|0010)\\
+p(1101|0101)+p(0011|0000)+p(1011|0000)\leq 1
\end{array}$}

\\

%UPB2

{\scriptsize $\mathcal{U}_2$} & (2,2,2,3)

&

{\scriptsize
$\begin{array}{l}p(0000|0000)+p(1010|1101)+p(0101|1010)
+p(1111|1001)+p(0011|1110)+p(1110|0111)\\
+p(1001|0011)+p(1101|1001)+p(0100|1010)\leq 1\end{array}$}
\\

%UPB3

{\scriptsize $\mathcal{U}_3$} & (2,2,2,2)

&

{\scriptsize $\begin{array}{l}p(0000|0000)+p(1010|0110)+p(0110|1100)+p(1100|1011)+p(0001|0000)+p(0011|0000)\\
+p(1011|0000)+p(1001|0000)+p(0101|0001)+p(1111|0010)\leq
1
\end{array}$}

\\

%UPB4

{\scriptsize $\mathcal{U}_4$} & (2,2,2,2)

&

{\scriptsize
$\begin{array}{l}p(0000|0000)+p(1100|0110)+p(0110|1010)+p(1010|1100)+p(1101|0000)+p(0001|0000)\\
+p(0011|0000)+p(0101|0000)+p(0111|0000)+p(1001|0000)+p(1011|0000)+p(1111|0000)\leq1
\end{array}$}

 \\ \hline

 %UPB5

 {\scriptsize $\mathcal{U}_5$} & (2,2,2,3)
&

{\scriptsize$\begin{array}{l}p(0000|0000)+p(1000|0111)+
p(0110|1012)+p(0001|0110)+p(1011|0001)+p(1101|0102)\\
+p(1110|1101)\leq 1\end{array}$}
\\

%UPB6

{\scriptsize $\mathcal{U}_6$} & (2,2,2,3)

&

{\scriptsize
$\begin{array}{l}p(0000|0000)+p(1010|0111)+p(0111|1101)
+p(1100|1012)+p(0011|0110)+p(1011|0111)\\
+p(0100|1011)+p(1111|1102)\leq 1\end{array}$}
\\

%UPB7

{\scriptsize $\mathcal{U}_7$} & (2,2,2,3)

&

{\scriptsize $\begin{array}{l}
p(0000|0000)+p(1001|1110)+p(0110|1001)+p(1111|1012)+p(1000|0000)+p(0001|1000)\\
+p(0101|1001)+p(0010|1000)+p(0011|1000)+p(0111|1001)+p(1110|1102)+p(0100|1001)\leq
1
\end{array}$}
\\

\hline

%UPB8
{\scriptsize $\mathcal{U}_8$} & (2,2,3,3)

&

{\scriptsize $
p(0000|0000)+p(0101|1011)+p(1010|0112)+p(1111|1120)+p(0011|1002)+p(0100|0021)\leq1$}
\\

%UPB9

{\scriptsize $\mathcal{U}_9$} & (2,2,3,3)

&

{\scriptsize$\begin{array}{l}
p(0000|0000)+p(1000|0111)+p(0110|1102)+p(1011|0012)+p(0001|0120)+p(0111|0022)\\
+p(1100|0102)+p(1001|0111)+p(1101|0112)+p(1010|0112)+p(1111|0012)+p(1110|1102)\leq
1
\end{array}$}
\\

\hline\hline

%UPB6
{\scriptsize $\mathcal{U}_{10}$} & (1,2,2,2)

&

{\scriptsize $\begin{array}{l}
p(0000|0000)+p(0110|0011)+p(0011|0101)+p(0101|0110)+p(1000|0000)+p(1101|0011)\\
+p(1011|0101)+p(1101|0110)\leq1
\end{array}$}
\\

  \hline
\end{tabular}
\caption{Tight four-partite Bell inequalities with no quantum
violation (third column) constructed from the UPBs $\mathcal{U}_i$
$(i=1,\ldots,9)$ given in Table~\ref{TableI}. The fourth one is a
lifting of GYNI$_3$~(\ref{Ex1}) (see Sec.~\ref{more}). For
completeness, we also present a nontight Bell inequality,
constructed from the UPB $\mathcal{U}_{10}$. Noticeably, in this
case, the first observer has only one observable at his disposal.
The second column contains the number of observables per site.
Neither of the inequalities is equivalent to the GYNI's Bell
inequalities~(\ref{BellIneqEven}). Moreover, all have the same
maximal NC violation of $4/3$.}\label{TableII}
\end{table}

\end{widetext}

\subsection{Going to more parties}
\label{more}

Here we discuss some methods for extending $n$-partite Bell
inequalities, which are assumed to be associated to $n$-qubit
UPBs, to $(n+1)$-partite ones.

\paragraph*{Method 1.} Let $U_i$ $(i=1,2)$ be two, in general
different, $n$-partite UPBs. Then, consider the $(n+1)$-qubit set
of vectors $\{U_1\ot\ket{0},U_2\ot\ket{1}\}$. One immediately
checks that the latter is an $(n+1)$-partite UPB and therefore
allows one to construct a nontrivial $(n+1)$-partite Bell
inequality with no quantum violation. Nevertheless, the additional
party has only one measurement at his choice, and, moreover, the
obtained Bell inequality does not have to be tight, even if the
ones associated to $U_i$ were tight. A particular example of such
UPB (the associated Bell inequality is not tight) is
$\mathcal{U}_{10}$ (see Table~\ref{TableI}).

In order to obtain a tight Bell inequality in this way, one has to
replace one of the UPBs, say $U_2$, with a full basis
$\mathcal{B}$ (say the standard one) in $(\mathbbm{C}^2)^{\ot n}$.
Then, the set
\begin{equation}\label{niewiem}
\{U_1\ot\ket{0},\mathcal{B}\ot\ket{1}\}
\end{equation}
is again a UPB, however, this time it leads to a tight Bell
inequality, provided that the Bell inequality associated to $U_1$
is tight. This is because, on the level of Bell inequalities, such
a construction corresponds to the lifting to one more observer
studied in Ref.~\cite{Lifting}. It is proven there that such a
procedure always gives a tight Bell inequality if the initial one
was tight. Consequently, if the UPB $U_1$ corresponds to a tight
$n$-partite Bell inequality, the UPB~(\ref{niewiem}) always leads
to a tight $(n+1)$-partite Bell inequality.

A simple example of a Bell inequality constructed in this way is
the one associated to the UPB $\mathcal{U}_4$ (see
Table~\ref{TableII}). A direct check allows one to conclude that
$\mathcal{U}_4=\{\mathcal{U}_{\mathrm{Shifts}}\ot\ket{0},\mathcal{B}\ot\ket{1}\}$,
where $\mathcal{U}_{\mathrm{Shifts}}$ is given by
Eq.~(\ref{Schifts}), while $\mathcal{B}$ denotes the standard
basis in $(\mathbbm{C}^2)^{\ot 3}$. Notice that with the aid of
the nonsignalling conditions~(\ref{nonsignalling}), the associated
Bell inequality can be rewritten as
\begin{eqnarray}
&&p(0000|0000)+p(1100|0110)+p(0110|1010)\nonumber\\
&&+p(1010|1100)\leq p_{A_4}(0|0),
\end{eqnarray}
where $p_{A_4}(0|0)=\sum_{a_1,a_2,a_3}p(a_1a_2a_30|x_1x_2x_30)$
with arbitrary binary $x_i$ $(i=1,2,3)$.

This method, however, produces a Bell inequality with a single
observable at the new site. Now we describe a method which
allows to add an observer with more observables at his choice.

\paragraph*{Method 2.} Another simple and general method follows
from lemmas~\ref{Lemma1} and~\ref{Lemma2} and resembles the
methods already used in the literature (see
e.g.~Ref.~\cite{metody}).

Before we make it explicit, let us recall that if two given
$n$-qubit UPBs $U_i$ $(i=1,2)$ obey the assumptions of
lemma~\ref{Lemma2}, then the set of vectors~(\ref{NextUPB}) forms
an $(n+1)$-qubit UPB. On the level of Bell inequalities, this
procedure allows us to construct an $(n+1)$-partite Bell
inequality from two $n$-partite ones associated to $U_i$
$(i=1,2)$. One way to chose $U_2$ is to start from a UPB $U_1$ and
then apply Lemma~\ref{Lemma1}. Moreover, as we will see below, it
is always possible to do this in such a way that $U_1$ and $U_2$
satisfy the assumptions of lemma~\ref{Lemma2}. These two facts
give a quite general method of extending nontrivial $n$-partite
Bell inequalities with no quantum violation to $(n+1)$-partite
ones. A particular example of such an approach is the recursive
method relating GYNI$_{n+1}$ to GYNI$_n$ described in the previous
section. Here, we show another simple way to obtain a UPB $U_2$
from a given UPB $U_1$ such that~(\ref{OrtCond}) holds and
illustrate it with examples of five-partite tight Bell
inequalities.

Consider an $n$-qubit UPB $U_1$ which has $k$ different bases
$\mathcal{S}_j^{(i)}$ $(j=1,\ldots,k)$ at the $i$th site. By
replacing, at the site $i$, all local vectors by their orthogonal
complements, we turn $U_1$ into another set of vectors $U_2$,
which, due to Lemma~\ref{Lemma1}, is also a UPB. Then, we divide
both UPBs $U_i$ $(i=1,2)$ into $k$ subsets $U_i^{(j)}$
$j=1,\ldots,k$, each having at the $i$th site vector from
$\mathcal{S}^{(i)}_j$. One immediately sees that the sets
$U_1^{(j)}$ and $U_2^{(j)}$ $(j=1,\ldots,k)$ obey the assumptions
of lemma~\ref{Lemma2}. Therefore, the vectors~(\ref{NextUPB}) form
an $(n+1)$-partite UPB.

This method translates to Bell inequalities and allows one to
obtain an $(n+1)$-partite Bell inequality from an $n$-partite one
of the form~(\ref{BellIneqEqual}). More importantly, at least in
some cases this procedure preserves tightness. We applied it to
some of the Bell inequalities presented in Table~\ref{TableII} and
the resulting tight Bell inequalities are collected in
Table~\ref{TableIII}.
%

%\begin{widetext}
%
\begin{table*}[]
\begin{tabular}{c|c|l}
  \hline
UPB    & scenario & Bell inequality \\
  \hline\hline

%UPB1

{\scriptsize $(2,\mathcal{U}_1)$} & (2,2,2,2,2) &

{\scriptsize $\begin{array}{l}
p(00000|00000)+p(00001|00000)+p(00011|00000)+p(01011|00000)+p(00111|00010)+p(01100|01011)\\
+p(01101|10101)+p(01010|10110)+p(00110|11100)+p(10100|00000)+p(10101|00000)+p(10111|00000)\\
+p(11111|00000)+p(10011|00010)+p(11000|01011)+p(11001|10101)+p(11110|10110)+p(10010|11100)\leq
1
\end{array}$}

\\ \hline

%UPB6

{\scriptsize $(1,\mathcal{U}_6)$} & (2,2,2,2,3) &

{\scriptsize $\begin{array}{l}
p(00000|00000)+p(11000|00000)+p(00011|00110)+p(11011|00110)+p(01010|00111)+p(10010|00111)\\
+p(01011|00111)+p(10011|00111)+p(00111|11101)+p(11111|11101)+p(00100|11011)+p(11100|11011)\\
p(01100|11012)+p(10100|11012)+p(01111|11102)+p(10111|11102)\leq 1
\end{array}$}

 \\ \hline

 %UPB8

 {\scriptsize $(4,\mathcal{U}_8)$} & (2,2,3,3,3) &

{\scriptsize$\begin{array}{l}
p(00000|00000)+p(11110|11200)+p(01010|10111)+p(01000|00211)+p(10100|01122)+p(00110|10022)\\
p(00011|00000)+p(11101|11200)+p(01001|10111)+p(01011|00211)+p(10111|01122)+p(00101|10022)\leq
1
\end{array}$} \\ \hline

  \hline
\end{tabular}
\caption{Five-partite Bell inequalities obtained by applying our
method to some of the UPB listed in Table~\ref{TableI}. The first
column contains the UPB used to construct the Bell inequality
together with the number of party to which we applied our method
(in the last case we put the fifth party at the end), while the
second column contains the number of observables per site. Our
calculations have shown all these Bell inequalities to be
tight.}\label{TableIII}
\end{table*}

%\end{widetext}

\section{Conclusion}
\label{Conclusion}

Let us shortly summarize the obtained results and outline
possible directions for further research.

In this work, we have investigated in more detail the relation
between UPBs and nontrivial Bell inequalities with no quantum
violation which had been discovered recently in Ref.~\cite{my}. We
have restricted ourselves to qubit Hilbert spaces, since in this
case the sets of product vectors always have the property (P).
First, we have proven that a Bell inequality associated to any
$n$-qubit set of orthogonal product vectors $\mathcal{S}$ that can
be completed to a full basis is trivial in the sense that it
cannot be violated by any nonsignalling correlations. This result,
on the one hand, significantly supplements the characterization
done in Ref.~\cite{my}. On the other hand, it adds additional
weight to the significance of the concept of UPB in our
construction. The only nontrivial Bell inequalities that can be
obtained by our method from sets of product vectors are precisely
those associated to UPBs or sets of vectors that can be extended
only to UPBs.

Secondly, and more importantly, we have provided new examples of
\textit{tight} Bell inequalities with no quantum violation
constructed from UPBs. So far, the only known examples have been
the GYNI's Bell inequalities \cite{Mafalda}; whether UPBs can be
used for the construction of tight Bell inequalities remained an
open question. Here we have investigated the simplest nontrivial
case (four qubits) and found several examples. Finally, we have
presented methods allowing one to extend an $n$-qubit UPB to an
$(n+1)$-qubit one. On the level of Bell inequalities, these
methods enable to lift an $n$-partite Bell inequality associated
to a UPB to an $(n+1)$-partite one, and in some cases they
preserve tightness.

Third, we have proven tightness of the GYNI's
Bell inequalities for an arbitrary odd number of parties.

Nevertheless, many questions remain unanswered. A similar analysis
is missing in the case when the local dimensions are higher than
two. As already stated in Ref.~\cite{my}, theorems~\ref{theorem1}
and~\ref{theorem3} remain valid in arbitrary Hilbert spaces. It
is, however, unclear if in higher-dimensional Hilbert spaces the
only sets of mutually orthogonal product vectors that lead to
nontrivial Bell inequalities with no quantum violation are UPBs or
those that can be completed only to UPBs.

More importantly, it would be of interest to understand when our
construction leads to tight Bell inequalities. Although we already
have many examples of such inequalities, there still exist UPBs
that do not correspond to tight inequalities. It is unclear which
property of a UPB guarantees the tightness of the associated Bell
inequality. Since the only known examples of tight Bell
inequalities with no quantum violation are those obtained from
many-qubit UPBs, it is reasonable to conjecture that all such Bell
inequalities arise from UPBs.

\begin{acknowledgements}
Discussion with J. Stasi\'nska are acknowledged. This work was
supported by EU projects AQUTE, NAMEQUAM, Q-Essence and QCS, ERC
Grants QUAGATUA and PERCENT, Spanish MINCIN projects FIS2008-00784
(TOQATA), FIS2010-14830 and CHIST-ERA DIQIP, and UK EPSRC. R. A.
is supported by the Spanish MINCIN through the Juan de la Cierva
program. M. K. and M. K. acknowledge ICFO for kind hospitality.
\end{acknowledgements}

\appendix
\label{app}

\section{Tightness of GYNI's Bell inequalities}

Before turning to the proof of Theorem~\ref{gynitight}, we need some preparation.

In order to study whether a given Bell inequality defines a facet
of the polytope of CC, it needs to be determined which of the
local deterministic points (which we also call
\textit{strategies}; they are the extremal points of the polytope
of CC) saturate the Bell inequality, i.e.~which of them satisfy
the inequality with equality. In the Bell scenario with $n$
parties, one input bit per party and one output bit per party, the
possible local strategies are the following:
\begin{enumerate}
\item The constant strategy outputting $0$, independently of the input. By abuse of notation, we write $0$ for this strategy.
\item Similarly, the constant strategy always outputting $1$, for which we write $1$.
\item The ``identity'' strategy, under which the output equals the input. We denote this strategy by $i$.
\item The ``flip'' strategy, under which the output equals the negation of the input. We denote this strategy by $f$.
\end{enumerate}
An $n$-partite strategy then is defined by a string
$\boldsymbol{s}=[s_1\ldots s_n]$, with $s_j\in\{0,1,i,f\}$. In
order to clarify notation, we use square brackets to emphasize
that the string denotes a strategy. For example, \beq
\label{exstrategy} [i0f] \eeq stands for the strategy where the
first party outputs their input, the second party always $0$, and
the third party the negation of their input. Here and in the
following, we always regard the index $j$ as defined modulo $n$,
so that $s_{j+n} = s_j$.

While $i$ and $f$ are functions which map a bit to a bit, we may
think of $0$ also as a function, or alternatively as the value of
a bit; likewise for $1$. It will be clear from the context which
point of view is required.

Given a strategy $\boldsymbol{s}$ such that $s_k\in\{0,1\}$ for
some $k$, we define its \textit{evaluation}, denoted by
$\boldsymbol{s}'=[s'_1\ldots s'_n]$, by setting $s'_k=s_k$ and
recursively taking $s'_j = s_j(s'_{j-1})$, starting at $j=k+1$.
Here, $s_j(s'_{j-1})$ is the bit obtained by applying the function
$s_j$ to the bit $s'_{j-1}$,. If there are several $k$ for which
$s_k\in\{0,1\}$, then the evaluation does not depend on the choice
of $k$. Evaluation assigns to any strategy containing at least one
numerical entry a strategy consisting only of numerical entries.
For example, $[i0f]'=[101]$.

\begin{lem}
A strategy $\boldsymbol{s}$ saturates $\mathrm{GYNI}_n$
for odd $n$ iff
\begin{enumerate}
\item $s_k\in\{0,1\}$ for some $k$, and the evaluation $\boldsymbol{s}'$ has even parity, or
\item $s_k\in\{i,f\}$ for all $k$, and the number of $f$'s in $\boldsymbol{s}$ is even.
\end{enumerate}
\end{lem}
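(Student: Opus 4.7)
The plan is to turn saturation of (\ref{BellIneqOdd}) into the combinatorial statement that the deterministic strategy $\boldsymbol{s}$ wins the GYNI game on exactly one allowed input. Since the classical bound equals $1$ and each term $p(\widehat{\boldsymbol{x}}|\boldsymbol{x})$ is $0$ or $1$ for deterministic strategies, $\boldsymbol{s}$ saturates iff $s_j(x_j)=x_{j+1}$ (indices mod $n$) holds for exactly one $\boldsymbol{x}$ with $x_1\oplus\cdots\oplus x_n=0$. A short preliminary observation is that no strategy can win on two distinct valid inputs: if $\boldsymbol{x}\neq\boldsymbol{y}$ both won, then any constant $s_j$ would force $x_{j+1}=y_{j+1}$ and hence, by propagating around the cycle, $\boldsymbol{x}=\boldsymbol{y}$; so all $s_j$ would have to be bijections, forcing $\boldsymbol{y}=\overline{\boldsymbol{x}}$ bit-wise, which for odd $n$ flips the parity and contradicts validity of both. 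Hence saturation is equivalent to winning on \emph{at least} one valid input.

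I would then split the analysis into the two cases of the lemma. For Case 2 (every $s_j\in\{i,f\}$), the winning condition becomes $x_{j+1}=x_j\oplus\delta_j$ with $\delta_j=1$ if $s_j=f$ and $\delta_j=0$ otherwise. The value of $x_1$ determines the whole input, and closing the cycle after $n$ steps gives the consistency requirement $\sum_j\delta_j\equiv 0\pmod 2$, i.e.~an even number of $f$'s; otherwise no winning input exists and $\boldsymbol{s}$ fails to saturate. When this holds, both $x_1=0$ and $x_1=1$ yield winning inputs, and computing $x_1\oplus\cdots\oplus x_n$ in terms of $x_1$ and the $\delta_j$'s shows that the coefficient of $x_1$ equals $n\bmod 2=1$ for odd $n$, so the two candidates have opposite parities and exactly one of them is a valid input.

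For Case 1 (some $s_k\in\{0,1\}$), the constant value $s_k$ forces $x_{k+1}=s_k$, and the winning condition $x_{k+j+1}=s_{k+j}(x_{k+j})$ then propagates through the cycle in exactly the recursion defining the evaluation $\boldsymbol{s}'$; thus the unique candidate winning input is $x_j=s'_{j-1}$, which is valid iff $s'_1\oplus\cdots\oplus s'_n=0$, that is, iff $\boldsymbol{s}'$ has even parity. Combining the two cases yields the lemma. The main obstacle, and the only place where $n$ odd is indispensable, is the parity count in Case 2 together with the preliminary uniqueness argument: for even $n$ the two candidates in Case 2 would share the same parity, which is precisely why the GYNI inequality must be formulated differently in that regime, as in (\ref{BellIneqEven}).
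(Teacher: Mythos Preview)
Your proof is correct and follows essentially the same route as the paper: both split into the two cases, use propagation from a constant entry to identify the unique candidate input with the evaluation $\boldsymbol{s}'$ in Case~1, and use the cycle consistency $s_n\circ\cdots\circ s_1=i$ together with the bit-flip $\boldsymbol{x}\mapsto\overline{\boldsymbol{x}}$ to handle Case~2. The only difference is that you prove uniqueness of the winning input explicitly, whereas the paper takes for granted (from Theorem~\ref{theorem1}) that at most one term can equal~$1$ and simply characterizes when at least one does.
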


\begin{proof}
By definition of GYNI$_n$, the strategy $\boldsymbol{s}$ saturates
whenever there is a string of settings $\boldsymbol{x}$ having
even parity~(\ref{rules}) such that $x_{j+1} = s_j(x_j)$. When
$s_k\in\{0,1\}$ for some $k$, this means that $x_{k+1} = s_k$,
which can then be generalized to $x_{j+1} = s'_j$ for all $j$
using $x_{j+1} = s_j(x_j)$ and the definition of
$\boldsymbol{s'}$. Therefore, $\boldsymbol{x}$ has even parity if
and only if $\boldsymbol{s}'$ does.

If, on the other hand, $s_k\in\{i,f\}$ for all $k$, then we can use the equation
$$
x_1 = s_n(x_n) = (s_n\circ s_{n-1})(x_{n-1}) = \ldots =
(s_n\circ\ldots \circ s_1)(x_1)
$$
to see that $s_n\circ\ldots s_1=i$, which implies that the number
of $f$'s in $\boldsymbol{s}$ is even. Conversely, if the number of
$f$'s is even, one can take $x_1=0$ and recursively define
$x_{j+1} = s_j(x_j)$, which will give settings $\boldsymbol{x}$
satisfying $x_{j+1}=s_j(x_j)$ for all $j$. If the parity of
$\boldsymbol{x}$ is odd, then flipping all $x_j$ will do the job.
\end{proof}

We are now in position to prove Theorem~\ref{gynitight}. Assuming
$n$ odd, we need to prove that the set of strategies which
saturate $\mathrm{GYNI}_n$ has a linear hull of codimension $1$
within the linear hull of the polytope of CC. To achieve this, we
will show that every strategy can be written as a linear
combination of those strategies saturating GYNI$_n$ together with
the single non-saturating strategy \beq \label{ones} [11\ldots
11]. \eeq To ease the notation, we call two (non-saturating)
strategies $x$ and $y$ \textit{congruent}, written as $x\cong y$,
if their difference $x-y$ is a linear combination of saturating
strategies. Our goal is to show that all non-saturating strategies
are congruent to~(\ref{ones}).

The main tool in the proof is the single-party relation \beq
\label{localsquare} [0] + [1] = [i] + [f] , \eeq which holds since
both the mixture $\frac{1}{2}\left([0]+[1]\right)$ and the mixture
$\frac{1}{2}\left([i]+[f]\right)$ represent pure noise, which is
the local strategy which outputs a random bit independent of the
setting. This relation can be used to express every strategy as a
linear combination of strategies which do not involve $f$. For
example,
$$
[i0f] = [i00] + [i01] - [i0i] .
$$
Thanks to the relation~(\ref{localsquare}), it is enough to prove
the desired congruence only for those non-saturating strategies
which do not contain $f$. Every such strategy contains at least
one $0$ or $1$; for if not, then it would necessarily be equal to
$[i\ldots i]$, which is saturating.

We prove the congruence to~(\ref{ones}) in three steps:

\begin{enumerate}[leftmargin=0cm,itemindent=0cm]
\item \textbf{Claim:} Every non-saturating strategy containing
some $0$ or $1$ is congruent to its evaluation.
In particular, it is congruent to a constant non-saturating strategy.

\textbf{Subproof:} Equation~(\ref{localsquare}) implies
\begin{align}
\begin{split}
\label{rewriting}
[\boldsymbol{s}i0\boldsymbol{t}] + [\boldsymbol{s}f0\boldsymbol{t}] = [\boldsymbol{s}00\boldsymbol{t}] + [\boldsymbol{s}10\boldsymbol{t}] ,\\
[\boldsymbol{s}i1\boldsymbol{t}] + [\boldsymbol{s}f1\boldsymbol{t}] = [\boldsymbol{s}11\boldsymbol{t}] + [\boldsymbol{s}01\boldsymbol{t}] .
\end{split}
\end{align}
for all strings $\boldsymbol{s},\boldsymbol{t}\in\{0,1,i,f\}^*$,
where here and below the notation $X^{*}$ for a set $X$ refers to
the set of strings of any length over $X$. For both equations, the
left-hand side contains exactly one saturating term, and likewise
the right-hand side. Therefore, the non-saturating term on the
left-hand side is congruent to the non-saturating term on the
right-hand side, in which the number of occurrences of $i$ or $f$
is less by one. These congruences can be applied until no further
occurrences of $i$ or $f$ remain. The resulting constant strategy
turns out to be precisely the evaluation of the strategy we
started with.

In the following two steps, we will routinely make use of this congruence.

\item \textbf{Claim:} Every constant non-saturating strategy is congruent to one of the form
\beq \label{onezeroes} [0\ldots 01\ldots 10\ldots 0] \eeq with an
odd number of $1$'s.

\textbf{Subproof:} If $\boldsymbol{s}\in\{0,1\}^*$ is any string
of odd length with an even number of $1$'s, then there is a
congruence \beq \label{basiccong} [10\boldsymbol{s}]\cong
[11\overline{\boldsymbol{s}}], \eeq where
$\overline{\boldsymbol{s}}$ stands for $\boldsymbol{s}$ with all
bits flipped. We show this by writing
$\boldsymbol{s}=[\boldsymbol{t}0]$ or
$\boldsymbol{s}=[\boldsymbol{t}1]$ and considering the two cases
\beq \label{twocases} [10\boldsymbol{t}0] \cong
[11\overline{\boldsymbol{t}}1] ,\qquad [10\boldsymbol{t}1] \cong
[11\overline{\boldsymbol{t}}0] \eeq separately. In the first case,
$\boldsymbol{t}$ has an even number of $1$'s and an even number of
$0$'s. Let $\boldsymbol{u}\in\{i,f\}^{n-2}$ be such that
$[10\boldsymbol{u}]'= [10\boldsymbol{t}0]$; this $\boldsymbol{u}$
contains an even number of $f$'s. Then $[ff\boldsymbol{u}]$
saturates, and we can apply~(\ref{localsquare}) to expand
\begin{align}
\begin{split}
\label{ff} [ff\boldsymbol{\boldsymbol{u}}] = &\phantom{-}
[00\boldsymbol{\boldsymbol{u}}] + [01\boldsymbol{u}] +
[10\boldsymbol{u}] + [11\boldsymbol{u}] \\ & - [0i\boldsymbol{u}]
- [1i\boldsymbol{u}] - [i0\boldsymbol{u}] - [i1\boldsymbol{u}] +
[ii\boldsymbol{u}] .
\end{split}
\end{align}
By the assumption $[10\boldsymbol{u}]'= [10\boldsymbol{t}0]$, we have
\begin{align*}
[00\boldsymbol{u}]' = [00\boldsymbol{t}0], &\quad [01\boldsymbol{u}]' = [01\overline{\boldsymbol{t}}1], \\
[10\boldsymbol{u}]' = [10\boldsymbol{t}0], &\quad [11\boldsymbol{u}]' = [11\overline{\boldsymbol{t}}1], \\
[0i\boldsymbol{u}]' = [00\boldsymbol{t}0], &\quad [1i\boldsymbol{u}]' = [11\overline{\boldsymbol{t}}1], \\
[i0\boldsymbol{u}]' = [00\boldsymbol{t}0], &\quad [i1\boldsymbol{u}]' = [11\overline{\boldsymbol{t}}1].
\end{align*}
Since $\boldsymbol{t}$ contains an even number of $1$'s and an
even number of $0$'s, the only non-saturating strategies on the
right-hand sides of these equations are $[10\boldsymbol{t}0]$ and
$[11\overline{\boldsymbol{t}}1]$. Since each of the first eight
terms on the right-hand side of~(\ref{ff}) is congruent to its
evaluation, we obtain
\begin{align*}
0\phantom{0} \cong &\phantom{-} 0 + \phantom{00}0\phantom{0} +
[10\boldsymbol{t}0] + [11\overline{\boldsymbol{t}}1] \\ & - 0 -
[11\overline{\boldsymbol{t}}1] - \phantom{00}0\phantom{0} -
[11\overline{\boldsymbol{t}}1] + 0
\end{align*}
which proves $[10\boldsymbol{t}0]\cong
[11\overline{\boldsymbol{t}}1]$, as claimed. The second case
of~(\ref{twocases}) works similarly, upon choosing
$\boldsymbol{u}$ such that
$[10\boldsymbol{u}]'=[10\boldsymbol{t}1]$. Then $\boldsymbol{u}$
contains an odd number of $f$'s, so that $[if\boldsymbol{u}]$ is
saturating. Using the expansion
\begin{align}
\begin{split}
[if\boldsymbol{u}] = &\phantom{-} [00\boldsymbol{u}] +
[01\boldsymbol{u}] + [10\boldsymbol{u}] + [11\boldsymbol{u}] \\ &
- [0i\boldsymbol{u}] - [1i\boldsymbol{u}] - [f0\boldsymbol{u}] -
[f1\boldsymbol{u}] + [fi\boldsymbol{u}]
\end{split}
\end{align}
and applying reasoning analogous to the previous case shows $[10\boldsymbol{t}1]\cong [11\overline{\boldsymbol{t}}0]$.

Due to cyclic symmetry,~(\ref{basiccong}) actually implies
$$
[\boldsymbol{s}10\boldsymbol{t}]\cong [\overline{\boldsymbol{s}}11\overline{\boldsymbol{t}}] \quad\forall \boldsymbol{s},\boldsymbol{t}\in\{0,1\}^*.
$$
For any $\boldsymbol{r},\boldsymbol{u}\in\{0,1\}^*$ of appropriate
length and number of $1$'s, applying this transformation yields
the congruences
\begin{align*}
[\boldsymbol{r}100\boldsymbol{u}]\cong[\overline{\boldsymbol{r}}111\overline{\boldsymbol{u}}]\cong [\boldsymbol{r}010\boldsymbol{u}], \\
[\boldsymbol{r}101\boldsymbol{u}]\cong[\overline{\boldsymbol{r}}110\overline{\boldsymbol{u}}]\cong [\boldsymbol{r}011\boldsymbol{u}].
\end{align*}
In both cases, the right-hand side is equal to the left-hand side,
except for one $1$, which has been transported one position to the
right. Repeated application of these congruences transforms any
constant non-saturating strategy into one of the
form~(\ref{onezeroes}).

\item \textbf{Claim:} Every strategy of the form~(\ref{onezeroes}) is congruent to the strategy $[1\ldots 1]$.

\textbf{Subproof:} To see this, we consider the saturating strategy
$$
[i\ldots ifi\ldots ifi\ldots i]
$$
where the number of $i$'s in the middle is assumed to be one less
than the number of $1$'s in~(\ref{onezeroes}). Expanding the two
$f$'s by~(\ref{localsquare}) and keeping only the non-saturating
terms gives
\begin{align*}
0\cong & \phantom{-} [i\ldots i1i\ldots i0i\ldots 1] + [i\ldots i1i\ldots i1i\ldots i]\\
 & - [i\ldots i1i\ldots iii\ldots i] - [i\ldots iii\ldots i1i\ldots i].
\end{align*}
The first term evaluates to~(\ref{onezeroes}), while the others evaluate to $1\ldots 1$, thereby proving the desired congruence.
\end{enumerate}

In conclusion, any non-saturating strategy can be written as a
linear combination of saturating strategies and $[1\ldots 1]$ by
first expanding all $f$'s by~(\ref{localsquare}) and then applying
steps $1$ to $3$ to each of the resulting terms.

%%%%%%%%%%%%%%%%%%%%%%%%%%%%%%%%%%%%%%%%%%%%%%%%%%%%%%%%%%%%%%%%%%%%%
%%%%%%%%%%%%%%%%%%%%%%%%%%%%%%%%%%%%%%%%%%%%%%%%%%%%%%%%%%%%%%%%%%%%%
%%%%%%%%%%%%%%%%%%%%%%%%%%%%%%%%%%%%%%%%%%%%%%%%%%%%%%%%%%%%%%%%%%%%%

\end{document}